\documentclass{amsart}
\usepackage{amssymb}
\usepackage{verbatim}
\newtheorem{theorem}{Theorem}[section]

\newtheorem{proposition}[theorem]{Proposition}

\newtheorem{corollary}[theorem]{Corollary}

\theoremstyle{definition}
\newtheorem{definition}[theorem]{Definition}
\newtheorem{example}[theorem]{Example}

\theoremstyle{remark}
\newtheorem{remark}[theorem]{Remark}
\numberwithin{equation}{section}



\newcommand{\cE}{{\mathcal E}}

\newcommand{\M}{\mathcal{M}}
\newcommand{\A}{{\mathcal A}}

\newcommand{\cM}{{\mathcal M}}
\newcommand{\cN}{{\mathcal N}}

\newcommand{\X}{{\mathcal X}}

\newcommand{\Rn}{{\rm I\!R}}

\newcommand{\cS}{{\mathcal S}}
\newcommand{\I}{1\!{\mathrm l}}

\newcommand{\tM}{\widetilde{\mathcal{M}}}

\begin{document}

\title{On applications of Orlicz Spaces to Statistical Physics\\}

\author{W. A. Majewski}

\address{Institute of Theoretical Physics and Astrophysics, The Gdansk University, Wita Stwosza 57,\\
Gdansk, 80-952, Poland\\
E-mail: fizwam@univ.gda.pl}

\author{L. E. Labuschagne}

\address{Internal Box 209, School of Comp., Stat., $\&$ Math. Sci.\\
NWU, PVT. BAG X6001, 2520 Potchefstroom, South Africa\\
E-mail: Louis.Labuschagne@nwu.ac.za}

\begin{abstract}
We present a new rigorous approach based on Orlicz spaces for the description of the statistics of large regular statistical systems, both classical and quantum. This approach has the advantage that statistical mechanics is much better settled. In particular, a new kind of renormalization leading to states having a well defined entropy function is presented.
\end{abstract}

\maketitle

\keywords{{\bf Keywords}: (quantum) Orlicz spaces, Zygmund spaces, (quantum) regular statistical systems, $C^*$-algebras, von Neumann algebras, non-commutative integration, Boltzmann's equation.}


\section{Introduction}

To indicate reasons why (classical as well as non-commutative) Orlicz spaces are emerging in the theory of (classical and quantum) Physics we begin with a simple question asking when a physicist knows that a certain quantity is an observable. Obviously, one answers - an observable is known when also a function of this observable is  known. A nice illustration of this way of thinking is provided by Classical Mechanics - for example: knowing a coordinate one knows also a potential (being a function of coordinates), etc. 
It is worth pointing out that exactly this feature of observables was probably a motivation for Newton to develop calculus and to use it in his laws of motion. 

On the other hand, in Statistical Physics the same question seems to be more subtle. Namely, let $(X, \Sigma, m)$ be a probability space and $u$ an observable and thus a random variable (so a measurable function). The question just posed implies that we wish to at least know the average of $\langle u\rangle_m$
as well as $\langle F(u)\rangle_m$ for a large class of functions $F$. Assume that $F$ has the Taylor expansion $F(x) = \sum_i c_i x^i $. Our demands mean that $\langle F(u)\rangle_m = \sum_i c_i \langle u^i\rangle _m$ should be well defined. However this implies that one should be able to select a subset of observables, say ``regular'' observables, for which all moments are finite.
It is worth pointing out that for the special case of Dirac measures (so for point masses) the answer given by Classical Mechanics can be reproduced.

Let us consider the question posed above in detail in the context of probability theory. Denote by
 ${\cS}_{m}$  the set of the densities of all the probability measures equivalent to $m$, i.e.,

$${\cS}_{m} = \{ f  \in L^1(m): f>0 \quad m-a.s., E(f) =1 \}.$$

Here $E(f)\equiv \langle f\rangle_m$ stands for $\int f(x) dm(x)$. 
${\cS}_{m}$ can be considered as a set of (classical) states and its natural ``geometry''  comes from embedding ${\cS}_{m}$ into $L^1(m)$. However, it is worth pointing out that the 
Liouville space technique demands $L^2(m)$-space, whilst the 
employment of interpolation techniques needs other $L^p$-spaces with $p \geq 1$.

Turning to the moment problem, let us consider a class of 
moment generating functions; so fix $f \in {\cS}_{m}$ and take a real random variable $u$ on $(X, \Sigma, f dm)$.
Define (see \cite{B})

$$ \hat{u}_f(t) = \int exp(tu) f dm, \qquad t \in \Rn $$

and denote by $L_f$  the set of all random variables such that
 $\hat{u}_f$ is well defined in a neighborhood of the origin $0$, and
 the expectation of $u$ is zero.

One can observe that in this way a nice selection of (classical) observables was made (\cite{B}, and/or \cite{PS}) in that
 all the moments of every $u \in L_f$ exist and they are the values at $0$ of the derivatives of $\hat{u}_f$.

 But it is important to note that $L_f$ is actually a closed subspace of the {\bf Orlicz space} based on the exponentially growing function $\cosh - 1$ (see  \cite{PS}).
Consequently, one may say that even in classical statistical Physics one could not restrict oneself to merely $L^1(m)$, $L^2(m)$, $L^{\infty}(m)$
and the interpolating $L^p(m)$ spaces. 
Another argument in favour of Orlicz spaces, was provided by Cheng and Kozak \cite{CK}. Namely it seems that the natural framework within which certain non-linear integral equations of Statistical Mechanics can be studied, is provided by Orlicz spaces. In particular, the Orlicz space defined by the Young function $u \mapsto e^{|u|} - |u| - 1$ was playing a distinguished role (see \cite{CK} for details).
In other words, {\bf  generalizations of $L^p$-spaces - Orlicz spaces - do appear.}

But there is a second important problem. Statistical Physics aims at explaining thermodynamics. To this end one should have well defined so called state-functions. A nice example of such function is given by the entropy function which has an exceptional status among other state functions. Entropy is defined:

\begin{enumerate}
\item $H(f) = - \int f(x)\log f(x) d\mu$, $f \in {\cS}_{\mu}$, for the classical (continuous) case;
\item $ S(\varrho) = - \mathrm{Tr} \varrho \log \varrho $, $\varrho$ a density matrix, for the quantum case.
\end{enumerate}

The problem is that both definitions can lead to divergences. To illustrate the seriousness of this problem, we firstly consider the quantum case where we will follow Wehrl \cite{Weh} and Streater \cite{S1,S2}; see also \cite{LM1} . Let $\varrho_0$
be a quantum state (a density matrix) and $S(\varrho_0)$ its von Neumann entropy. Assume $S(\varrho_0)$
to be finite. It is an easy observation that in any neighborhood of $\varrho_0$ (given by the trace norm, so in the 
sense of quantum $L^1$-space) there are plenty of states with infinite entropy. 
One can say more (see Wehrl \cite{Weh} p. 241); the set of ``good'' density matrices $\{ \varrho : S(\varrho) < \infty \}$ is merely a meager set.
This should be considered alongside the thermodynamical rule which tells us that entropy should be a state function which is increasing in time. Thus we run into serious problems with explaining the phenomenon of return to equilibrium and with the second law of thermodynamics.

Turning to classical continuous entropy, we mention only that for $f \in L^1$ the functional $H(f)$ is not well defined - see \cite{Bour}, Chapter IV, \S 6, Exercise 18. (For other arguments see Section 4.)

Attempting to find a solution to the problems outlined above, we propose to replace the pair of Banach spaces  
\begin{equation} 
 \langle L^{\infty}(X, \Sigma, m), L^1(X, \Sigma, m)\rangle
 \end{equation}
appearing in standard approaches to statistics and Statistical Physics,
with the pair of Orlicz spaces (or pairs equivalent to this one, see the ensuing Sections)
\begin{equation}
\label{KwantOrlicz}
 \langle L^{\cosh -1}, \ L\log(L+1)\rangle.
 \end{equation}

The first Orlicz space  $L^{\cosh -1}$ appears as the proper framework for describing the set of regular observables (cf arguments given prior to the discussion of the second problem). The second Orlicz space $L\log(L+1)$, is the space defined by the Young function $x \mapsto x\log(x+1), \ x\geq 0$. This space is nothing but an equivalent renorming of the K{\"o}the dual
of $L^{\cosh -1}$, as
 $\cosh(x) -1$ and $x \mapsto x\log(x + \sqrt{1+x^2}) - \sqrt{1 + x^2} +1, \ x\geq 0$
are complementary Young's functions, with the latter function equivalent to $x \mapsto x\log(1 +x), \ x\geq 0$ (see the next Section for details). 

To appreciate the significance of our choice (for details see the ensuing Sections) we note that models considered in Statistical Physics and Quantum Field Theory  are par excellence large systems, i.e. systems with infinite degrees of freedom. \textit{Our new approach is designed
 exactly for large systems}. Furthermore, 
we note that the condition 
$f \in L\log(L+1)$ guarantees the finiteness of classical (continuous) and quantum entropies (for finite measure case) as well as legitimising the  
consideration of elements of $L\log(L+1)$ as continuous functionals over the set of regular observables of a system. Thus $L\log(L+1)$ is home to the states of a regular statistical system. 
On the other hand, it should be mentioned that in both the classical and quantum cases, an analysis of problems such as return to equilibrium and entropy production, demands more general Banach spaces than $L^p$-spaces - see
\cite{Vil} and the references given therein for the classical case, and \cite{LM1} for the quantum case.

Consequently, we propose {\bf a new rigorous approach} for the description of statistics of large regular statistical systems
{\bf having the advantage that statistics is better settled.} This was obtained by means of the ``regularization'' of admissible states. Namely, $L^{\cosh - 1}$ can be seen as an enlarged family of observables.
Consequently, the (K\"othe) dual  space consist of more regular states (cf Theorem \ref{2.6}). This is a new way
of removing ``non-physical'' states which lead to infinities. Thus a kind of renormalization is proposed.

The next important point to note here, is the fact that Quantum Theory is by nature probabilistic. Therefore the proposed new approach is especially important for a description of quantum systems, and the presented quantization of 
classical regular systems is essential. The presented quantization of classical regular systems reveals rather strikingly the difference between systems associated with factors of type III and II, i.e. large systems, and those which are associated with type I factors (see the end of Section 6). Namely, for the von Neumann algebra $B(H)$ (type I factor) the quantization based on the Dodds, Dodds, de Pagter approach (see Section 5) forces an employment of Banach function spaces which are based on completely atomic measure space, with all atoms having equal measure. This implies that for simple systems (with finite degree of freedom) the standard pair of algebras $\left\langle B(H), L^1(B(H))\right\rangle$
is unchanged ($L^1(B(H))$ stands for the trace class operators). In other words \textit{the regularization procedure which we propose is effective for large systems.}   

The paper will be organized as follows: in Section 2 we review some of the standard facts on (classical) Orlicz spaces. Then classical regular systems are described (Section 3). In Section 4 we indicate how our approach can be extended to the infinite measure case. In particular, certain questions around the Boltzmann equation are considered.
In Section 5 we provide a brief account on non-commutative Orlicz spaces (which is taken from \cite{LM}). Section 6 is devoted to the study of regular non-commutative statistical systems. In particular, the quantization of the Orlicz space approach to regular systems is presented.

\section{Classical Orlicz spaces }

Let us begin with some preliminaries (for details we refer to \cite{KS}, \cite{BS}, and \cite{RR}).

\begin{definition}\cite{BS}
Let $\psi: [0, \infty) \to [0, \infty]$ be an increasing and left-continuous function such that $\psi(0)=0$. Suppose that on $(0,\infty)$ 
$\psi$ is neither identically zero nor identically infinite. Then the function $\Psi$ defined by
\begin{equation}
\label{psi}
\Psi(s) = \int_0^s \psi(u) du, \qquad (s\geq 0)
\end{equation}
is said to be a Young's function.
\end{definition}

Clearly, $x \mapsto \cosh(x) -1$, $x \mapsto x\log(x + \sqrt{1+x^2}) - \sqrt{1 + x^2} +1$, $x \mapsto x \log(x+1)$ are Young's functions while $x \mapsto x\log x$ is not.
\begin{definition}\cite{BS}, \cite{RR}
\label{delta}
\begin{enumerate}
\item A Young's function $\Psi$ is said to satisfy the $\Delta_2$-condition if there	 exist $s_0 > 0$ and $c>0$ such that 
\begin{equation}
\Psi(2s) \leq c \Psi(s) < \infty, \qquad (s_0 \leq s < \infty).
\end{equation}
If $\Psi$ satisfies the above condition for $s_0=0$, we say that it satisfies the $\Delta_2$-condition \emph{globally}.
\item  A Young's function $\Phi$ is said to satisfy $\nabla_2$-condition if there exist $x_0 >0$ and $l >1$ such that
\begin{equation}
\Phi(x) \leq \frac{1}{2l} \Phi(lx)
\end{equation}
for $x \geq x_0$.
\end{enumerate}
If $\Phi$ satisfies the above condition for $x_0=0$, we say that it satisfies the $\nabla_2$-condition \emph{globally}.
\end{definition}
It is easy to verify that the Young's functions, given prior to Definition \ref{delta}, $x \mapsto x\log(x + \sqrt{1+x^2}) - \sqrt{1 + x^2} +1$, $x \mapsto x\log(x+1)$ ($x \mapsto \cosh(x) -1$),  satisfy the $\Delta_2$-condition ($\nabla_2$-condition, respectively).
 
We also need
\begin{definition}\cite{BS}
 Let $\Psi$ be a Young's function, represented as in (\ref{psi}) as the integral of $\psi$. Let
 \begin{equation}
 \label{1}
 \phi(v) = \inf \{ w: \psi(w)\geq v \}, \qquad (0\leq v \leq \infty).
 \end{equation}
 Then the function
 \begin{equation}
 \label{2}
 \Phi(t) = \int_0^t \psi(v) dv, \qquad (0\leq t \leq \infty)
 \end{equation}
 is called the complementary Young's function of $\Psi$.
 \end{definition}

We note that if the function $\psi(w)$ is continuous and monotonically increasing, then $\phi(v)$ is a function exactly inverse to $\psi(w)$. Consequently, as
\begin{equation}
\cosh(x) - 1 = \int_0^x \sinh(v) dv,
\end{equation}
and $sinh(x)$ has a well defined  inverse: $\mathrm{arcsinh}(x)$, we arrive at the second Young's function, namely
\begin{equation}
x\log(x + \sqrt{1+x^2}) - \sqrt{1 + x^2} +1 = \int_0^x \mathrm{arcsinh}(v)dv.
\end{equation}
We have

\begin{corollary}
$x\log(x + \sqrt{1+x^2}) - \sqrt{1 + x^2} +1$  and $\cosh x -1$
are complementary Young's functions.
\end{corollary}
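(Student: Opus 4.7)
The plan is to apply the definition of complementary Young's function directly to $\Psi(s) = \cosh(s) - 1$. Write $\Psi(s) = \int_0^s \sinh(u)\,du$, so the generating function is $\psi(u) = \sinh(u)$. Since $\sinh$ is continuous and strictly increasing on $[0,\infty)$, it admits an ordinary inverse on $[0,\infty)$, and so the infimum in (\ref{1}) collapses to that inverse:
\begin{equation*}
\phi(v) = \inf\{w \geq 0 : \sinh(w) \geq v\} = \mathrm{arcsinh}(v) = \log\bigl(v + \sqrt{1+v^2}\bigr).
\end{equation*}
This is precisely what the paper points out immediately before the corollary.

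Next I would compute the complementary function $\Phi(t) = \int_0^t \phi(v)\,dv$ explicitly. The natural tool is integration by parts with $u = \mathrm{arcsinh}(v)$ and $dw = dv$, using the identity $\frac{d}{dv}\mathrm{arcsinh}(v) = \frac{1}{\sqrt{1+v^2}}$. This gives
\begin{equation*}
\int_0^t \mathrm{arcsinh}(v)\,dv = t\,\mathrm{arcsinh}(t) - \int_0^t \frac{v}{\sqrt{1+v^2}}\,dv = t\log\bigl(t+\sqrt{1+t^2}\bigr) - \sqrt{1+t^2} + 1,
\end{equation*}
where I used $\int_0^t v(1+v^2)^{-1/2}\,dv = \sqrt{1+t^2} - 1$. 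This is exactly the stated first Young's function, completing the verification.

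There is essentially no obstacle; the only thing one should do with a little care is to check that the infimum in (\ref{1}) really does reduce to the genuine inverse. This reduction relies on the fact that $\sinh$ is continuous and strictly monotone on $[0,\infty)$ with $\sinh(0)=0$, so that the set $\{w : \sinh(w) \geq v\}$ is the closed half-line $[\mathrm{arcsinh}(v),\infty)$ for every $v \geq 0$ and its infimum is attained. Once $\phi = \mathrm{arcsinh}$ is established, the integration by parts above (which is the computation already sketched in the excerpt) delivers the corollary.
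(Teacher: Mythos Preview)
Your proposal is correct and follows exactly the argument sketched in the paper immediately preceding the corollary: write $\cosh(s)-1=\int_0^s\sinh(u)\,du$, note that the generalized inverse in (\ref{1}) reduces to the genuine inverse $\mathrm{arcsinh}$, and then identify $\int_0^t\mathrm{arcsinh}(v)\,dv$ with the stated function. The only addition you supply is the explicit integration by parts verifying that last identity, which the paper leaves unstated.
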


Let $L^0$ be the space of measurable 
functions on some $\sigma$-finite measure space $(Y, \Sigma, \mu)$. Orlicz spaces are defined in:

\begin{definition}
The Orlicz space 
$L^{\Psi}$ associated with  $\Psi$ is defined to be the set 
\begin{equation}\label{3}
L^{\Psi} \equiv L^{\Psi}(Y, \Sigma, \mu) = \{f \in 
L^0 : \Psi(\lambda |f|) \in L^1 \quad \mbox{for some} \quad \lambda = \lambda(f) > 0\}.
\end{equation}
\end{definition}

This space turns out to be a linear subspace of $L^0$, and $L^{\Psi}$ becomes a Banach space when 
equipped with the so-called Luxemburg-Nakano norm 
$$\|f\|_\Psi = \inf\{\lambda > 0 : \|\Psi(|f|/\lambda)\|_1 \leq 1\}.$$
Here $\| \cdot \|_1$ stands for $L^1$-norm.
An equivalent - Orlicz norm, for a pair $(\Psi, \Phi)$ of complementary Young's functions is given by
$$\|f\|_\Phi = \sup\{ \int|fg|d\mu : \int\Psi(|g|) d\mu \leq 1 \}.$$
If $\Psi$ satisfies the $\Delta_2$ condition globally, $L^\Psi$ is more regular in the sense that then 
$L^{\Psi}(Y, \Sigma, \mu) = \{f \in L^0 : \Psi(|f|) \in L^1 \}$. In the case of finite measures, $\Psi$ only needs to satisfy $\Delta_2$ for large values of $t$ for this equality to hold. (See \cite[Theorem III.1.2]{RR} 

Clearly, (classical) $L^p$-spaces are nice examples of Orlicz spaces. 
Other useful examples, so called Zygmund spaces, are defined as follows (cf \cite{BS}):

\begin{itemize}
\item $L\log L$ is defined by the following Young's function
$$s\log^+ s = \int_0^s \phi(u) du$$
where $\phi(u) =0 $ for $0\leq u\leq1$ and $\phi(u) = 1 + \log  u$ for $1< \infty $, where $\log^+x = max (\log x, 0)$.
Note that this Young's function is 0-valued on all of $[0,1]$, and not just for $s=0$.
\item $L_{exp}$ is defined by the Young's function
$$ \Psi(s) = \int_0^s \psi(u)du, $$
where $\psi(0) = 0$ , $\psi(u) = 1$ for $0<u<1$, and $\psi(u)$ is equal to $e^{u -1}$ for $1 < u < \infty$.
Thus
$\Psi(s) = s$ for $0\leq s \leq 1$ and  $\Psi(s) = e^{s - 1}$ for $1< s < \infty$.
\end{itemize}

To understand the role of Zygmund spaces the following result will be helpful, see \cite{BS}:
\begin{theorem}
\label{2.6}
Let $Y, \Sigma, m)$ be a finite measure space with $m(Y) = 1$. The continuous embeddings
\begin{equation}
L^{\infty} \hookrightarrow L_{exp} \hookrightarrow L^p \hookrightarrow L\log L \hookrightarrow L^1
\end{equation}
hold for all p satisfying $1<p< \infty$. Moreover, $L_{exp}$ may be identified with the Banach space dual of $L\log L$.
\end{theorem}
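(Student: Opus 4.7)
The plan is to derive the embedding chain from a single general principle: on a finite measure space with $m(Y)=1$, whenever two Young's functions satisfy $\Psi_1(s)\leq C\,\Psi_2(s)$ for all $s\geq s_0$, one has $L^{\Psi_2}\hookrightarrow L^{\Psi_1}$ continuously. This is in the same spirit as the $\Delta_2$/large-$s$ comparison behind the \cite[Theorem III.1.2]{RR} quoted just above: the small-$s$ contribution to $\int\Psi_1(|f|/\lambda)\,dm$ is bounded by $\Psi_1(s_0)\cdot m(Y)$ because $m(Y)<\infty$, while the large-$s$ contribution is bounded by $C\int\Psi_2(|f|/\lambda)\,dm$, and a rescaling of the Luxemburg norm turns this into $\|f\|_{\Psi_1}\leq K\|f\|_{\Psi_2}$.

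With this lemma in hand I would verify each embedding by an elementary pointwise inequality between Young's functions. For $L^{\infty}\hookrightarrow L_{exp}$, if $\|f\|_{\infty}=M$ and $\lambda=2M$ then $\Psi_{exp}(|f|/\lambda)\leq\Psi_{exp}(1/2)=1/2$ pointwise, so $\int\Psi_{exp}(|f|/\lambda)\,dm\leq 1/2\cdot m(Y)\leq 1$, giving $\|f\|_{L_{exp}}\leq 2\|f\|_{\infty}$. For $L_{exp}\hookrightarrow L^p$ one uses that $e^{s-1}$ eventually dominates $s^p$ for every finite $p$, giving $s^p\leq C_p\Psi_{exp}(s)$ for $s\geq s_0(p)$. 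For $L^p\hookrightarrow L\log L$ with $p>1$ the elementary estimate $\log s\leq s^{p-1}$ for large $s$ yields $s\log^+ s\leq s^p$. For $L\log L\hookrightarrow L^1$ one uses that $\log s\geq 1$ for $s\geq e$, so the defining Young's function $s$ of $L^1$ is majorised there by $s\log^+ s$. In each case the comparison lemma turns the pointwise inequality into a continuous embedding.

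For the duality claim, I would first observe that the density $u\mapsto(1+\log u)\mathbf{1}_{u>1}$ defining $L\log L$ and the density $v\mapsto\max(1,e^{v-1})$ defining $L_{exp}$ are mutual generalised inverses in the sense of (\ref{1}), so the two Young's functions are complementary by (\ref{2}). Next one verifies that $\phi(s)=s\log^+ s$ satisfies the $\Delta_2$-condition at infinity: for $s\geq e$,
\begin{equation*}
\frac{\phi(2s)}{\phi(s)}=\frac{2s(\log 2+\log s)}{s\log s}=2+\frac{2\log 2}{\log s}\leq 2(1+\log 2).
\end{equation*}
Because $m(Y)=1$, this large-$s$ $\Delta_2$-property is precisely what is needed for the standard Orlicz-space duality theorem to apply and identify the Banach-space dual $(L\log L)^{*}$ with the Orlicz space for the complementary Young's function, namely $L_{exp}$, via the pairing $(f,g)\mapsto\int fg\,dm$ with equivalence of norms under the Luxemburg/Orlicz switch.

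The only genuinely delicate point is this last step: one must note that the $\Delta_2$-condition for $s\log^+ s$ really only holds at infinity (it cannot hold globally because $s\log^+ s$ vanishes on $[0,1]$), and so invoke the \emph{finite-measure} version of the duality theorem rather than the global one, keeping track of which of the two equivalent norms sits on each side of the pairing. The four embedding inequalities are, by contrast, completely routine consequences of the comparison lemma and elementary asymptotics.
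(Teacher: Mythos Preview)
Your argument is correct. Note, however, that the paper does not actually supply a proof of this theorem: it is quoted as a known result with the attribution ``see \cite{BS}'' (Bennett--Sharpley), so there is no in-paper proof to compare against. What you have written is essentially the standard proof one finds in that reference: a comparison lemma for Young's functions valid for large $s$ (with the small-$s$ piece absorbed by the finiteness of $m$), applied four times to elementary asymptotic inequalities, followed by the finite-measure version of the Orlicz duality theorem once one checks that $s\log^{+}s$ satisfies $\Delta_2$ for large $s$ and that the derivative densities $\phi(u)=(1+\log u)\mathbf{1}_{u>1}$ and $\psi(v)$ of the paper are mutual generalised inverses in the sense of (\ref{1}). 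Your caveat about the $\Delta_2$-condition holding only at infinity (forcing the finite-measure hypothesis in the duality step) is exactly the point the paper flags immediately after the statement of the theorem, where it records the general duality result from \cite{RR} under precisely these hypotheses.

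One cosmetic remark: your formula $v\mapsto\max(1,e^{v-1})$ for the density of $L_{exp}$ differs from the paper's $\psi$ at the single point $v=0$ (the paper sets $\psi(0)=0$); this is immaterial for the integral but worth matching if you want the complementarity to hold literally under Definition~2.3 rather than almost everywhere.
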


More generally, for  a pair $(\Psi, \Phi)$ of complementary Young's functions with the function $\Psi$ satisfying $\Delta_2$-condition and the function $\Phi(s) = 0$ if and only if $s=0$, one has that $(L^{\Psi})^* = L^{\Phi}$ (cf \cite{RR}).

Theorem \ref{2.6} is a particular case of the following fact (for all details see \cite{BS}): since any classical Orlicz space $X$ is a rearrangement-invariant Banach function space (over a resonant measure space), one has 
\begin{equation}
\label{ha}
L^1\cap L^{\infty} \hookrightarrow X \hookrightarrow L^1 + L^{\infty}
\end{equation}

For the finite measure case (\ref{ha}) is simplified. Namely, one has

\begin{equation}
L^{\infty} \hookrightarrow X \hookrightarrow L^1
\end{equation}

We note that $L^1 \cap L^{\infty}$ is therefore the smallest Orlicz space while $L^1 + L^{\infty}$ is the largest one.

Finally, we will write $F_1 \succ F_2$ if and only if $F_1(bx) \geq F_2(x)$ for $x\geq 0$ and some $b>0$, and we say that the functions $F_1$ and $F_2$ are equivalent, $F_1 \approx
F_2$, if $F_1\prec F_2$ and $F_1\succ F_2$.

\begin{example}
\label{2.7}
Consider, for $x>0$
\begin{itemize}
\item $F_1(x) = x\log(x + \sqrt{1+x^2}) - \sqrt{1 + x^2} +1 = \int_0^x \log(s + \sqrt{1+s^2}) ds$,
\item $F_2 = k x\log x = k \int_0^x (\log s + 1)ds$, $k > e$.
\end{itemize}
Then $F_1 \succ F_2$.
\end{example}

\begin{remark}
\label{uwaga1}
\begin{enumerate}
\item Recall that $x \mapsto x\log x$ is not a Young's function. Therefore it does not make sense to speak of the Orlicz space $L^{x\log x}$.
\item If $\Psi \succ F$, $\Psi$ is a Young's function satisfying $\Delta_2$-condition, and the function $F$ is bounded  below by $- c$, then for $f \in L^{\Psi}$
the integral $\int F(f)(u) dm(u)$ is finite provided that the measure $m$ is finite.
\end{enumerate}
\end{remark}

To see \ref{uwaga1} (2)
we note: by the definition of Orlicz spaces $f \in L^{\Psi}$ implies $\int \Psi(\lambda |f|)(u) dm(u) < \infty$ for some $\lambda$.
Further, as $x \mapsto \Psi(x)$ satisfies the $\Delta_2$-condition, the set $\{ f \in L^0; \int \Psi(|f|) dm(u) < \infty\}$ is a linear space. Therefore, $\lambda > 0$ can be taken arbitrarily.
Hence
$$\infty > \int \Psi(\lambda|f|) dm \geq \int F(|f|)dm \geq -c \cdot m(\cE_f)$$
for a proper choice of $\lambda$ (for example: $\lambda = b =k > e $), where $\cE_f = \{ u: F(|f|)(u) <0 \}$.
Finally,
$\int F(|f|)dm$ is finite if and only if $\int c^{\prime} F(|f|)dm$ is finite, where $c^{\prime}$ is an arbitrary fixed positive number. Thus, we arrived at

\begin{corollary}
\label{dwa1}
Let $(X, \Sigma, m)$ be a probability space.
Putting $\Psi(x) = x\log(x + \sqrt{1+x^2}) - \sqrt{1 + x^2} +1 $ and $F(x) = k x \log x $ where $k >e$ is a fixed positive number we obtain: $H(f)$ is finite provided that $f \in L_+^{\Psi}.$
\end{corollary}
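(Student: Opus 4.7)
The goal is to deduce finiteness of $H(f) = -\int f\log f\, dm$ for $f\in L^\Psi_+$ directly from the structural result Remark \ref{uwaga1}(2), which says that whenever $\Psi$ is a Young's function satisfying $\Delta_2$ and $F$ is a function bounded below with $\Psi\succ F$, then $\int F(|f|)\,dm$ is finite for every $f\in L^\Psi$. So the whole plan is to verify the three hypotheses of Remark \ref{uwaga1}(2) for the pair $(\Psi,F)$ in the statement.

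First, the $\Delta_2$-condition for $\Psi(x) = x\log(x+\sqrt{1+x^2}) - \sqrt{1+x^2} + 1$ has already been recorded right after Definition \ref{delta}, so nothing new is needed there. Second, the domination $\Psi\succ F$ with $F(x)=kx\log x$, $k>e$, is precisely the content of Example \ref{2.7}. Third, the boundedness from below of $F$ is a short one-variable calculus check: the function $x\mapsto x\log x$ on $[0,\infty)$ (extended by $0$ at $0$) attains its minimum $-1/e$ at $x=1/e$, so $F(x)\ge -k/e$ for all $x\ge 0$. This gives the constant $c=k/e$ needed in the displayed chain of inequalities preceding the corollary.

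With the hypotheses in place, Remark \ref{uwaga1}(2) yields $\int F(f)\,dm = k\int f\log f\, dm <\infty$ for $f\in L^\Psi_+$. Because $k>0$ is a fixed constant, the finiteness is equivalent to the finiteness of $\int f\log f\,dm$ itself; equivalently, I can invoke the final sentence of the paragraph preceding the corollary, which observes that $\int F(|f|)\,dm$ and $\int c'F(|f|)\,dm$ are simultaneously finite for any $c'>0$. Since $f\ge 0$ we have $|f|=f$, so $H(f) = -\int f\log f\,dm$ is a finite real number, proving the corollary.

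There is no substantive obstacle; the only non-cosmetic point is the lower bound $x\log x\ge -1/e$, which is what allows the negative part of $F(f)$ to be controlled via the probability measure assumption $m(X)=1<\infty$. Without a finite measure, the term $-c\cdot m(\cE_f)$ could be infinite and the argument would break down, which is exactly why the corollary is stated only for probability spaces.
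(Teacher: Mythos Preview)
Your proposal is correct and follows exactly the approach the paper intends: the corollary is stated immediately after the justification of Remark \ref{uwaga1}(2), and the paper's implicit proof is precisely the verification of its three hypotheses ($\Delta_2$ for $\Psi$, the domination $\Psi\succ F$ from Example \ref{2.7}, and the lower bound $x\log x\ge -1/e$) that you spell out. Your final remark about why the finite-measure assumption is essential also matches the paper's reasoning via the term $-c\cdot m(\cE_f)$.
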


The equivalence relation $\approx$ on the set of Young's functions defined prior to Example \ref{2.7} leads to classes of Young's functions. The principal significance of this concept follows from:

\begin{theorem}(\cite{RR})
Let $\Phi_i$, $i =1,2$ be a pair of equivalent Young's functions. Then $L^{\Phi_1} = L^{\Phi_2}$.
\end{theorem}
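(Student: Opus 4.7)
The plan is to unwind the definition of $\approx$ and push the inequality through the defining integral of an Orlicz space, exploiting the fact that the Luxemburg--Nakano definition is already stable under rescaling of the argument.

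First I would make the hypothesis explicit. By definition of $\Phi_1 \approx \Phi_2$, there exist constants $a,b>0$ such that
\begin{equation*}
\Phi_2(x) \le \Phi_1(bx) \quad\text{and}\quad \Phi_1(x) \le \Phi_2(ax), \qquad x \ge 0.
\end{equation*}
(Each inequality records one of the two relations $\Phi_1 \succ \Phi_2$ and $\Phi_2 \succ \Phi_1$.) Both $\Phi_1$ and $\Phi_2$ are nondecreasing on $[0,\infty)$, which is the only monotonicity we will need.

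Next I would prove $L^{\Phi_1} \subseteq L^{\Phi_2}$. Take $f\in L^{\Phi_1}$; by (\ref{3}) there is $\lambda>0$ with $\int \Phi_1(\lambda|f|)\,d\mu < \infty$. Set $\lambda' := \lambda/b$. Then for every point of $Y$,
\begin{equation*}
\Phi_2(\lambda'|f|) \;=\; \Phi_2\!\left(\tfrac{\lambda}{b}|f|\right) \;\le\; \Phi_1\!\left(b\cdot\tfrac{\lambda}{b}|f|\right) \;=\; \Phi_1(\lambda|f|),
\end{equation*}
using the first inequality above with $x = \lambda|f|/b$. Integrating gives $\int \Phi_2(\lambda'|f|)\,d\mu < \infty$, so $f \in L^{\Phi_2}$. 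The reverse inclusion $L^{\Phi_2}\subseteq L^{\Phi_1}$ is obtained by the symmetric argument, replacing $b$ by $a$ and swapping the roles of $\Phi_1$ and $\Phi_2$.

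I do not expect any serious obstacle; the point is merely that the parameter $\lambda$ in the definition of $L^{\Psi}$ is allowed to be arbitrary, so the multiplicative constant $b$ appearing in the equivalence is absorbed into the choice of $\lambda$. Note in particular that no $\Delta_2$-hypothesis is needed: the argument works for all Young's functions, which is why the result can legitimately be invoked to justify replacing $x\log(x+\sqrt{1+x^2})-\sqrt{1+x^2}+1$ by the simpler representative $x\log(1+x)$ in (\ref{KwantOrlicz}). As a byproduct, the same pointwise estimate shows that the Luxemburg norms $\|\cdot\|_{\Phi_1}$ and $\|\cdot\|_{\Phi_2}$ are equivalent: indeed $\|f\|_{\Phi_2}\le b\,\|f\|_{\Phi_1}$ and $\|f\|_{\Phi_1}\le a\,\|f\|_{\Phi_2}$, so the identification is an isomorphism of Banach spaces, not merely of sets.
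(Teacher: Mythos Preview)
Your argument is correct and is precisely the standard proof one finds in Rao--Ren. Note, however, that the paper does not supply its own proof of this statement: it is quoted with attribution to \cite{RR} and used as a black box, so there is nothing in the paper to compare your argument against beyond the citation itself.
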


Consequently, a pair of Orlicz spaces $(X, X^{\prime})$ where $X^{\prime}$ stands for the (K\"othe) dual of $X$ can be determined using different but equivalent pairs of complementary Young's functions. We will use this strategy to replace the Orlicz space defined by $x \mapsto x\log(x + \sqrt{1+x^2}) - \sqrt{1 + x^2} +1 $ by the Orlicz space $L \log(L+1)$ and to legitimize the pair $\langle L^{\cosh - 1}, L\log(L+1)\rangle$. For the finite measure case we will see that one even replace $\langle L^{\cosh - 1}, L\log(L+1)\rangle$ by the pair of Zygmund spaces $\langle L_{exp}, L\log L\rangle$.

\begin{proposition}
\label{jeden}
Let $(Y, \Sigma, \mu)$ be a $\sigma$-finite measure space and $L\log(L+1)$ be the Orlicz space defined by the Young's function $x \mapsto x\log(x+1), \ x\geq0$. Then $L\log(L+1)$ is an equivalent renorming of the K\"othe dual of $L^{\cosh -1}$.
\end{proposition}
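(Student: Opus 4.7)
The strategy is to invoke two general facts from the preceding material: first, the duality theorem stating that, for a complementary pair of Young's functions, the Köthe dual of the first Orlicz space is the second one (with the Orlicz norm playing the role of dual norm); second, the fact that equivalent Young's functions generate the same Orlicz space with equivalent norms. The Corollary preceding the proposition already tells us that
\[
\Phi(x) = x\log\!\bigl(x + \sqrt{1+x^{2}}\bigr) - \sqrt{1+x^{2}} + 1
\]
is the complementary Young's function of $\cosh x - 1$; consequently $L^{\Phi}$, equipped with the Orlicz norm, is precisely the Köthe dual of $L^{\cosh - 1}$. The entire content of the proposition therefore reduces to proving the equivalence of Young's functions
\[
\Phi(x) \approx G(x) := x\log(x+1).
\]

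To establish this equivalence I would carry out an asymptotic comparison at the two regimes $x \downarrow 0$ and $x \to \infty$, which is all that is needed since both $\Phi$ and $G$ are continuous, strictly increasing on $[0,\infty)$ with $\Phi(0) = G(0) = 0$, so that on any compact subset of $(0,\infty)$ they are bounded above and below by positive multiples of each other. Near $0$: using $\log(x + \sqrt{1+x^{2}}) = \mathrm{arcsinh}(x) = x - x^{3}/6 + O(x^{5})$ and $\sqrt{1+x^{2}} = 1 + x^{2}/2 + O(x^{4})$, one obtains $\Phi(x) = x^{2}/2 + O(x^{4})$, while $G(x) = x^{2} + O(x^{3})$. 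At infinity: $\log(x + \sqrt{1+x^{2}}) = \log(2x) + O(x^{-2})$ and $\sqrt{1+x^{2}} = x + O(x^{-1})$, giving $\Phi(x) = x\log x + x(\log 2 - 1) + O(1)$, while $G(x) = x\log x + O(1)$. Both expressions therefore behave like $x^{2}$ near $0$ and like $x\log x$ at infinity, so one can exhibit constants $b_{1}, b_{2} > 0$ with $G(x) \leq \Phi(b_{1} x)$ and $\Phi(x) \leq G(b_{2} x)$ for all $x \geq 0$, which is exactly $G \approx \Phi$.

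Once this equivalence is in hand, the theorem on equivalent Young's functions gives $L^{G} = L^{\Phi}$ as sets, and a standard argument (bounding the Luxemburg-Nakano functional of one by a constant multiple of the other using the inclusion of modular unit balls under the scaling $x \mapsto bx$) yields equivalence of the associated norms. Transporting this equivalence to the Orlicz norm, the Köthe dual $L^{\Phi}$ of $L^{\cosh - 1}$ is therefore realized, up to equivalence of norms, as $L\log(L+1)$ carrying its own natural (Luxemburg-Nakano) norm, which is the claimed renorming.

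The only genuinely technical step is the two-sided asymptotic comparison that produces explicit constants $b_{1}, b_{2}$ valid uniformly on $[0,\infty)$; the difficulty is entirely bookkeeping, because the linear correction $-\sqrt{1+x^{2}} + 1$ in $\Phi$ has the opposite sign to the leading $x\log x$ growth and one must verify it is dominated after the rescaling. This is handled by choosing $b_{1}$ large enough to absorb the $-x$ contribution of $\Phi(b_{1} x)$ into its $b_{1} x \log(b_{1} x)$ term, and choosing $b_{2} \geq 1$ so that $G(b_{2} x) = b_{2} x \log(b_{2} x + 1)$ dominates $x\log(x+\sqrt{1+x^{2}})$; the rest is a routine monotonicity check on a compact interval bridging the small-$x$ and large-$x$ estimates.
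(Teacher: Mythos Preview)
Your argument is correct, but it takes a different route from the paper's. You work directly with the explicit complementary function $\Phi(x)=x\log(x+\sqrt{1+x^2})-\sqrt{1+x^2}+1$ of $\cosh-1$ and compare it asymptotically with $G(x)=x\log(x+1)$. The paper instead first observes that $\cosh x-1\approx e^x-x-1$ globally, then uses the standard fact that equivalent Young's functions have equivalent conjugates to replace the somewhat awkward $\Phi$ by the cleaner conjugate of $e^x-x-1$, namely $(x+1)\log(x+1)-x$, and finally checks $(x+1)\log(x+1)-x\approx x\log(x+1)$.

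The paper's detour through $e^x-x-1$ buys simplicity: both comparisons $\cosh x-1\approx e^x-x-1$ and $(x+1)\log(x+1)-x\approx x\log(x+1)$ are essentially one-line observations, and one never has to manipulate the square-root expression in $\Phi$ or worry about the sign issue you flag with the $-\sqrt{1+x^2}+1$ term. Your direct approach avoids invoking the ``conjugates of equivalent functions are equivalent'' principle and is perhaps more self-contained, at the cost of the asymptotic bookkeeping you describe. Either way the conclusion $L\log(L+1)=L^\Phi$ with equivalent norms follows.
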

\begin{proof}
Firstly observe that there are $0<a \leq b< \infty$ such that

$$ \phi_1(ax) \leq \phi_2(x) \leq \phi_1(bx)$$
for $x>0$, where $\phi_1(x) = \cosh x - 1$ and $\phi_2(x) = e^x - x -1$. Consequently, $\phi_1\approx \phi_2$ (even globally equivalent, cf \cite{RR}, Section 2.2). Hence, the conjugate function of $\cosh x - 1$ is equivalent to the conjugate function of $e^x - x -1$, namely $(x+1)\log(x+1) - x$.
Finally, observe that there are $0<c \leq d < \infty$ such that 
$$\psi_1(cx) \leq \psi_2(x) \leq \psi_1(bx)$$
for $x>0$, where $\psi_1(x) = (x+1)\log(x+1) - x$ and $\psi_2(x) = x\log(x+1)$. Thus $\psi_1\approx \psi_2$, and the proof is complete.
\end{proof}

We wish to close this Section with an analysis of the relation between the pair of Orlicz spaces $\langle L^{\cosh -1}, L \log (L+1)\rangle$ and the pair of Zygmund spaces $\langle L_{exp}, L \log L\rangle$ for the finite measure case.

\begin{proposition}
\label{dwa}
For finite measure spaces $(\X, \Sigma, m)$ one has
\begin{equation}
L^{\cosh - 1} = L_{exp}.
\end{equation}
Consequently, for the finite measure case, $\langle L^{\cosh - 1}, L\log(L+1)\rangle$ is an equivalent renorming of $\langle L_{exp}, L\log L\rangle $.
\end{proposition}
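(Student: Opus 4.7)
The plan is to show that the Young's function $\Phi_1(x)=\cosh x - 1$ and the Young's function $\Phi_2$ defining $L_{exp}$ (with $\Phi_2(s)=s$ on $[0,1]$ and $\Phi_2(s)=e^{s-1}$ on $(1,\infty)$) are equivalent in the sense of $\approx$, and then to invoke the fact, already used in Section 2 via \cite[Theorem III.1.2]{RR}, that on a finite measure space the Orlicz space is determined by the tail behaviour of its defining Young's function.

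For the first step, I would simply read off the asymptotics. From $\cosh x - 1 = \tfrac{1}{2}(e^x+e^{-x})-1$ one has $\Phi_1(x)=\tfrac12 e^x(1+o(1))$ as $x\to\infty$, while $\Phi_2(x)=e^{-1}e^x$ for $x\geq 1$. Taking $b=1$ gives $\Phi_1(x)/\Phi_2(x)\to e/2>1$, so $\Phi_2(x)\leq \Phi_1(bx)$ for all $x$ large; taking $a=1/2$ gives $\Phi_1(ax)/\Phi_2(x)\to 0$, so $\Phi_1(ax)\leq \Phi_2(x)$ for all $x$ large. Thus one obtains $x_0>0$ and $0<a\leq b<\infty$ with $\Phi_1(ax)\leq \Phi_2(x)\leq \Phi_1(bx)$ for $x\geq x_0$. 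Since $(\X,\Sigma,m)$ is finite and both functions are bounded on $[0,x_0]$, this asymptotic equivalence upgrades to equality $L^{\Phi_1}=L^{\Phi_2}$ of the underlying sets, with the two Luxemburg--Nakano norms equivalent (e.g.\ by the open mapping theorem, or by a direct estimate using that $m(\X)<\infty$ absorbs the contributions from the bounded sublevel set $\{|f|\leq x_0\}$).

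For the second (``consequently'') statement, I would combine Step~1 with Proposition \ref{jeden} and Theorem \ref{2.6}. Proposition \ref{jeden} says $L\log(L+1)$ is an equivalent renorming of the K\"othe dual $(L^{\cosh -1})'$; by Step~1 this K\"othe dual equals $(L_{exp})'$ with equivalent norms. On the other hand, $x\log^+x$ satisfies $\Delta_2$, so $L\log L$ has order-continuous norm and its K\"othe dual coincides with its Banach dual, which is $L_{exp}$ by Theorem \ref{2.6}. Reflecting this duality yields $(L_{exp})'=L\log L$ with equivalent norms, hence $L\log(L+1)=L\log L$ as renorming, and the pair $\langle L^{\cosh-1},L\log(L+1)\rangle$ is indeed an equivalent renorming of $\langle L_{exp},L\log L\rangle$.

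The main obstacle is mostly bookkeeping: $\Phi_2$ is piecewise defined and is not smooth at the seam $s=1$, so care is needed when quoting the equivalence criterion; but restricting attention to the tail $x\geq x_0>1$ makes this harmless. The only real subtlety is the finite-measure refinement of \cite[Theorem III.1.2]{RR}---one must remember that tail-equivalence of Young's functions plus $m(\X)<\infty$ really does force equality of the Orlicz spaces, which is precisely the statement already cited in Section~2.
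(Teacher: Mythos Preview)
Your proof is correct and follows essentially the same strategy as the paper: both establish that the defining Young's functions agree asymptotically at infinity and then use the finite-measure hypothesis to absorb the bounded part, yielding $L^{\cosh-1}=L_{exp}$. The only cosmetic difference is that the paper routes through the intermediate function $e^t-t-1$ (already known to be equivalent to $\cosh t-1$ from Proposition~\ref{jeden}) and writes out the splitting $E\cup E^c$ explicitly, whereas you compare $\cosh x-1$ with $\Phi_{exp}$ directly and invoke the finite-measure refinement of \cite[Theorem III.1.2]{RR} as a black box.
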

\begin{proof}
As $\Phi(t)= e^t - t -1 \approx \cosh t -1$, see Proposition \ref{jeden}, it is enough to prove that $L^{\Phi} = L_{exp}$. To show this, firstly recall
that the Young's function $\Phi_{exp}(t)$ defining $L_{exp}$ is equal to 
\begin{equation}
\Phi_{exp}(t) = \left \{
\begin{array}{ll}
t & \quad \text{if \quad $0 \leq t \leq 1$}\\
\frac{1}{e} e^t &  \quad \text{if \quad $t >1$.}\\
\end{array} \right. 
\end{equation}
Secondly, 
\begin{equation}
\lim_{t \to \infty} \frac{e^t - t - 1}{\Phi_{exp}(t)} = \lim_{t \to \infty}\frac{e^t -t -1}{\frac{1}{e} e^t} = e.
\end{equation}
Hence there exists $u_0>0$ and some $K>1$ so that
\begin{equation}
\frac{1}{K} (e^t -t -1) \leq \Phi_{exp}(t) \leq K (e^t - t -1)
\end{equation}
for $t\geq u_0$. Given a function $f$ we therefore have
\begin{equation}
\int\Phi_{exp}(|f|) \chi_{E}(x) d m(x) < \infty \Longleftrightarrow \int (e^{|f|} -|f| -1) \chi_E(x) dm(x) < \infty
\end{equation}
where $E= \{ x \in X: |f(x)| \geq u_0 \}$. Next, let $M_0$, $M_1$ respectively be the maximal value of $\Phi_{exp}$ and $e^t -t -1$ on $[0, u_0]$.
Then
$$\int \Phi_{exp}(|f|) \chi_{E^c}(x) dm(x) \leq M_0  \int dm(x) < \infty$$
and
$$\int(e^{|f|} - |f| - 1) \chi_{E^c}(x) dm(x) \leq M_1 \int dm(x) < \infty,$$
where $E^c \equiv X \setminus E$.
If we combine this with the earlier observation, then we get that
\begin{equation}
\int \Phi_{exp}(|f|) dm < \infty \Longleftrightarrow \int(e^{|f|} - |f| -1) dm < \infty,
\end{equation}
which proves the claim.
\end{proof}

\section{Classical regular systems \cite{LM}}
We begin with the definition of the classical regular model (cf \cite{PS}). Let $\{\Omega, \Sigma, \nu\}$
be a probability space; $\nu$ will be called the reference measure. The set of  densities of all the probability measures equivalent to $\nu$ will be called the state space $\mathcal{S}_{\nu}$, i.e.
\begin{equation}
{\mathcal{S}}_{\nu} = \{ f  \in L^1(\nu): f>0 \quad \nu-a.s.,\, E(f) =1 \},
\end{equation}
where, $E(f) \equiv \int f  d\nu$. It is worth pointing out that $f \in \mathcal{S}_{\nu}$ implies that $fd\nu$ is a probability measure.
\begin{definition}
\label{clasmodel}
The classical statistical model consists of the measure space $\{\Omega, \Sigma, \nu\}$, state space $\mathcal{S}_{\nu}$, and the set of measurable functions $L^0(\Omega, \Sigma, \nu)$.
\end{definition}
To select regular random variables, i.e.  random variables having all finite moments,
we define the moment generating functions as follows: fix $f \in {\mathcal{S}}_{\nu}$, take a real random variable $u$ on $(\Omega, \Sigma, f d\nu)$
and define:
\begin{equation}
\hat{u}_f(t) = \int exp(tu) f d\nu, \qquad t \in \mathbb{R}. 
\end{equation}

Note that
$t \mapsto \hat{u}_f(t)$ is called the Laplace transform of $u$ cf \cite{B}.
 In the sequel we will need the following properties of $\hat{u}$ ({for details see Widder, \cite{Wid}}):

\begin{enumerate}
\item $\hat{u}$ is analytic in the interior of its domain,
\item its derivatives are obtained by differentiating under the integral sign.
\end{enumerate}
Now the following definition is clear (cf \cite{PS}):
\begin{definition}
The set of all random variables on $(\Omega, \Sigma, \nu)$ such that for a fixed $f \in S_{\nu}$
\begin{enumerate}
\item{} $\hat{u}_f$ is well defined in a neighborhood of the origin $0$,
\item{} the expectation of $u$ is zero,
\end{enumerate}
will be denoted by $L_f \equiv L_f(f \cdot \nu)$ and called the set of regular random variables.
\end{definition}

 The set of  regular random variables having zero expectation is characterized by:

\begin{theorem}(Pistone-Sempi, \cite{PS})
$L_f$ is the closed subspace of the Orlicz space $L^{\cosh - 1}(f\cdot \nu)$ of zero expectation random
variables.
\end{theorem}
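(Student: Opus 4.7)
The plan is to identify $L_f$ with the zero-expectation hyperplane inside $L^{\cosh-1}(f\cdot\nu)$ by translating the condition ``$\hat{u}_f$ is defined on a neighbourhood of $0$'' into an Orlicz integrability condition, and then to deduce closedness from the continuity of expectation with respect to the Luxemburg norm.

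The decisive step is to show that condition (1) in the definition of $L_f$ is equivalent to $u \in L^{\cosh - 1}(f\cdot \nu)$. Since $\cosh$ is even, one has $\cosh(\lambda|u|) - 1 = \cosh(\lambda u) - 1$, and the pointwise identity
\begin{equation*}
\cosh(\lambda u) = \tfrac{1}{2}\bigl(e^{\lambda u} + e^{-\lambda u}\bigr)
\end{equation*}
makes $\int(\cosh(\lambda|u|) - 1)\, f\,d\nu < \infty$ equivalent to the simultaneous finiteness of $\hat{u}_f(\lambda)$ and $\hat{u}_f(-\lambda)$. The map $t \mapsto \hat{u}_f(t)$ is convex with $\hat{u}_f(0) = 1 < \infty$, so its set of finiteness is automatically an interval; hence the existence of a single symmetric pair $\pm\lambda$ with $\hat{u}_f(\pm\lambda) < \infty$ is the same as the existence of an entire open neighbourhood of $0$ on which $\hat{u}_f$ is defined. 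This equates condition (1) with $u \in L^{\cosh-1}(f\cdot\nu)$, and adjoining condition (2) identifies $L_f$ with $\{u \in L^{\cosh-1}(f\cdot\nu) : E_{f\cdot\nu}(u) = 0\}$.

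For closedness, note that $f\cdot\nu$ is a probability measure, so the finite-measure form of (\ref{ha}) yields a continuous embedding $L^{\cosh-1}(f\cdot\nu) \hookrightarrow L^1(f\cdot\nu)$. Consequently the expectation functional $u \mapsto \int u\, f\,d\nu$ is bounded with respect to the Luxemburg norm, and its kernel is a closed subspace of $L^{\cosh-1}(f\cdot\nu)$. The only point genuinely requiring attention is the convexity-and-symmetrisation argument in the first step, which is what allows one-sided information about $\hat{u}_f$ to be promoted to the symmetric integrability condition encoded by the even Young's function $\cosh - 1$; once this is in place, the rest is a repackaging of the definition of the Orlicz space together with the embedding already recorded in Section~2.
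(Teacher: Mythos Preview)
Your argument is correct. The equivalence between condition (1) and membership in $L^{\cosh-1}(f\cdot\nu)$ follows exactly as you describe: the evenness of $\cosh$ and the identity $\cosh(\lambda u)=\tfrac{1}{2}(e^{\lambda u}+e^{-\lambda u})$ together with the elementary bounds $e^{\pm\lambda u}\leq 2\cosh(\lambda u)$ give the two-sided comparison, and convexity of $t\mapsto\hat{u}_f(t)$ (with $\hat{u}_f(0)=1$) ensures the finiteness set is an interval, so a single symmetric pair $\pm\lambda$ suffices. The closedness step via the finite-measure embedding $L^{\cosh-1}(f\cdot\nu)\hookrightarrow L^1(f\cdot\nu)$ and continuity of $u\mapsto\int u\,f\,d\nu$ is also clean and correct.

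There is, however, nothing to compare your proof against: the paper does not supply its own proof of this theorem. It is stated as a result of Pistone and Sempi and attributed directly to \cite{PS}, with no argument given in the text. Your write-up is essentially the standard proof one finds in the information-geometry literature, and it is self-contained modulo the embedding (\ref{ha}) already recorded in Section~2.
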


Consequently, the first space in the postulated pair, see (\ref{KwantOrlicz}),  has appeared as the natural home for regular observables.
But as $L\log(L+1)$ is the K\"othe dual of $L^{\cosh -1}$, see Proposition \ref{jeden}, the appearance of the second Orlicz space in (\ref{KwantOrlicz}) is also explained. In particular, elements in $L\log(L+1)$ can be considered as ``normal'' functionals over the space $L^{\cosh - 1}$ of regular observables. 

Turning to the entropy problem, we note (see Remark \ref{uwaga1}(2)) that there is a relation $\succ$  between the Young's function $x\log(x + \sqrt{1+x^2}) - \sqrt{1 + x^2} +1$ and the entropic function $c \cdot x\log x$ where $c$ is a positive number. Consequently, as the Orlicz space defined by the Young's function 
$x\log(x + \sqrt{1+x^2}) - \sqrt{1 + x^2} +1$ is equal to $L \log(L+1)$ (cf Proposition \ref{jeden}),
the condition $f \in L\log(L+1)$ guarantees that the continuous
entropy is well defined for the finite measure case. Thus we arrived at:
\begin{corollary}
$$\langle L^{\cosh -1}, L\log(L+1)\rangle$$
or equivalently
$$\langle L_{exp}, L\log L\rangle$$
provides the proper framework for the description of classical regular statistical systems (based on probability measures).
\end{corollary}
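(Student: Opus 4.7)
The statement is a synthesis of results already established in the section, so my plan is to assemble them into one summary organized around the three demands made of a ``proper framework'': it should house the regular observables, dually house the states, and render the classical continuous entropy finite on those states.

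For the first demand, I would invoke the Pistone--Sempi theorem cited just before the corollary: the subspace $L_f$ of zero-mean regular random variables sits as a closed subspace of $L^{\cosh - 1}(f \cdot \nu)$, which fixes the first slot of the pair. For the second, Proposition \ref{jeden} identifies $L\log(L+1)$ as an equivalent renorming of the K\"othe dual of $L^{\cosh - 1}$, so that its elements serve as continuous linear functionals over the space of regular observables; this explains why states should live in $L\log(L+1)$.

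For the entropy finiteness, I would appeal to Corollary \ref{dwa1}: on a probability space, every $f$ in the positive cone of the Orlicz space defined by $x\log(x + \sqrt{1+x^2}) - \sqrt{1+x^2} +1$ has finite continuous entropy $H(f)$. This uses Example \ref{2.7} (the dominance $\succ$ of that Young function over $kx\log x$ for $k>e$) together with Remark \ref{uwaga1}(2). Proposition \ref{jeden} then identifies this Orlicz space with $L\log(L+1)$ up to equivalent renorming, so the conclusion transfers to states in $L\log(L+1)$.

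For the equivalent formulation $\langle L_{exp}, L\log L\rangle$ in the finite measure setting, I would cite Proposition \ref{dwa}, which proves $L^{\cosh - 1} = L_{exp}$ on a finite measure space, and then invoke K\"othe duality to match $L\log(L+1)$ with $L\log L$, consistent with Theorem \ref{2.6}, which already gives $L_{exp}$ as the Banach dual of $L\log L$. The only subtlety is tracking that the equivalences $\approx$ of Young's functions pass through both duality and the entropy estimate, but each of the preceding results is designed to respect such equivalences, so no real obstacle arises. In short, there is no new theorem to prove here: the corollary is the capstone that packages the work of the section, and the ``proof'' is simply the cross-referencing just outlined.
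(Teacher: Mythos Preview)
Your proposal is correct and follows essentially the same approach as the paper: the corollary is indeed a synthesis, and the paper's own proof is a one-sentence pointer to Proposition~\ref{dwa} and Corollary~\ref{dwa1} after noting that the underlying measures $f\cdot\nu$ are finite. You are simply more explicit than the paper in spelling out the roles of Pistone--Sempi, Proposition~\ref{jeden}, and Theorem~\ref{2.6}, all of which the paper relegates to the discussion preceding the corollary rather than to the proof itself.
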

\begin{proof}
Note that regular statistical systems are reliant on finite measures $f \cdot \nu$, so the claim is a direct consequence of the previous Section (cf Proposition \ref{dwa} and Corollary \ref{dwa1}).
\end{proof}
An analysis of the classical continuous entropy for the infinite measure case will be presented in Section 4 below.

\section{Applications of Orlicz space technique to Boltzmann's theory}

The goal of this section is twofold. Firstly, we want to present another example illustrating how the Orlicz space technique is useful in Statistical Mechanics. Secondly, we have studied the continuous entropy $H(f) = - \int f(x)\log f(x) dx$ only for the finite measure case. Now, we wish to show that if $H(f)$ is defined in the context of the Orlicz space $L\log(L+1)$ (or $L\log L$), the natural 
Lyapunov functional for Boltzmann's equation, namely $H_+(f) \equiv - H(f)$, is then well defined.

Recall that (spatially homogeneous) Boltzmann's equation reads:
\begin{equation}
\frac{\partial f_1}{\partial t} = \int d\Omega \int d^3v_2 I(g, \theta)|{\bf v}_2 - {\bf v}_1| (f^{\prime}_1 f^{\prime}_2 - f_1f_2)
\end{equation}
where $f_1 \equiv f({\bf v}_1,t), f_2^{\prime} \equiv f({\bf v}^{\prime}_2,t)$, etc, are velocity distribution functions, with $\bf v$ standing for velocities before collision, and ${\bf v}^{\prime}$ for velocities after collision.
$I(g, \theta)$ denotes the differential scattering cross section, $d\Omega$ is the solid angle element, and $g = |{\bf v}|$. As it was mentioned, the natural Lyapunov functional for this equation is 
the continuous entropy with opposite sign, i.e. $H_+(f) = \int f(x) \log f(x) dx$, where $f$ is supposed to be a solution of Boltzmann's equation. The exceptional status of the functional $H_+(f)$ in an analysis of Boltzmann's equation follows from McKean's result \cite{Mckean}. He proved that the entropy $H(f)$ is the only increasing functional for some simplified model of gas. The time behaviour of $H_+(f)$ is described by the H-Theorem (see \cite{Thom} for physical aspects of Boltzmann's equation while a survey of the mathematical theory of this equation can be found in \cite{Vil}; see also \cite{CC}, \cite{DP1}, and \cite{DP2}).

One of the features of H-functional $H_+(f)$ where $f \in L^1$ is the fact that it is unbounded both from below and from the above (see eg \cite{Gar}). We wish to show that Orlicz space technique allows a rigorous analysis of H-functional so also H-Theorem.  We start with

\begin{proposition} Let $f \in L^1\cap L\log L$ where both Orlicz spaces are over $(\Rn^3, \Sigma, d^3v)$ ($d^3v$ - the Lebesgue measure). Then
\begin{equation}
\label{L3}
\int|f|\log |f| d^3v = \lim_{\epsilon\searrow 0} \int_{E^f_{\epsilon}} |f| \log |f| d^3v
\end{equation}
is well defined, and bounded above. Moreover, each $\int_{E^f_{\epsilon}} |f| \log |f| d^3v$ is finite, where 
$E^f_{\epsilon} = \{v: |f(v)| > \epsilon \}$.
\end{proposition}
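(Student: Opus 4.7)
The plan is to split $|f|\log|f|$ into its positive and negative parts (supported on $\{|f|\geq 1\}$ and $\{0<|f|<1\}$ respectively), bound the positive part uniformly using both assumptions $f\in L^1$ and $f\in L\log L$, and then get the required boundedness above of the full integral plus existence of the limit from monotone convergence on the negative side.

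First I would show that $\int_{\{|f|\geq 1\}}|f|\log|f|\,d^3v<\infty$. By definition of the Orlicz space $L\log L$ there is some $\lambda>0$ with $\int \lambda|f|\log^+(\lambda|f|)\,d^3v<\infty$. If $\lambda\geq 1$, then on $\{|f|\geq 1\}$ we have $\log|f|\leq \log(\lambda|f|)=\log^+(\lambda|f|)$, so $|f|\log|f|\leq\tfrac{1}{\lambda}\lambda|f|\log^+(\lambda|f|)$ and the claim follows. If $\lambda<1$, split $\{|f|\geq 1\}=\{1\leq|f|<1/\lambda\}\cup\{|f|\geq 1/\lambda\}$. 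On the first piece $|f|\log|f|\leq|f|\log(1/\lambda)$, which is integrable because $f\in L^1$. On the second piece write $|f|\log|f|=\tfrac{1}{\lambda}\lambda|f|\log(\lambda|f|)+|f|\log(1/\lambda)$, and both terms are integrable by the $L\log L$ and $L^1$ hypotheses, respectively.

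Next, for fixed $\epsilon\in(0,1)$ decompose $E^f_\epsilon=A_\epsilon\cup B$, where $A_\epsilon=\{\epsilon<|f|<1\}$ and $B=\{|f|\geq 1\}$. On $B$ the integral $\int_B |f|\log|f|\,d^3v$ is finite by the previous step and independent of $\epsilon$. On $A_\epsilon$ the integrand is non-positive and satisfies $\bigl||f|\log|f|\bigr|\leq|f|\,|\log\epsilon|$, so $\bigl|\int_{A_\epsilon}|f|\log|f|\,d^3v\bigr|\leq|\log\epsilon|\,\|f\|_1<\infty$. This proves the second assertion. Finally, as $\epsilon\searrow 0$ the sets $A_\epsilon$ increase to $\{0<|f|<1\}$, and since $-|f|\log|f|\cdot\mathbf{1}_{A_\epsilon}$ is a non-negative monotone increasing family, the monotone convergence theorem ensures $\int_{A_\epsilon}|f|\log|f|\,d^3v$ tends to a limit in $[-\infty,0]$. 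Combined with the constancy of the integral over $B$, this shows $\lim_{\epsilon\searrow 0}\int_{E^f_\epsilon}|f|\log|f|\,d^3v$ exists in $[-\infty,\int_B|f|\log|f|\,d^3v]$, hence is well-defined and bounded above by the finite quantity $\int_{\{|f|\geq 1\}}|f|\log|f|\,d^3v$.

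The main technical obstacle is the first step: because we are on an infinite measure space, the $L\log L$ membership only gives $\int\Psi(\lambda|f|)\,d^3v<\infty$ for some (possibly small) scale $\lambda$, and the Young's function $s\log^+ s$ does \emph{not} satisfy the $\Delta_2$-condition globally, so one cannot simply absorb the $\lambda$. Resolving this forces the explicit scale-by-scale estimate above, which is exactly where the assumption $f\in L^1$ is needed to fill the gap on the transition region $\{1\leq|f|<1/\lambda\}$. Once this hybrid $L^1$-plus-$L\log L$ estimate is in hand, the remainder is a routine splitting together with monotone convergence for the non-positive part.
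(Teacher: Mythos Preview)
Your proof is correct. The overall strategy---split into the region $\{|f|\geq 1\}$ where $|f|\log|f|\geq 0$ and the region $\{\epsilon<|f|<1\}$ where it is non-positive, then pass to the limit by monotone convergence---matches the paper's in spirit, but the two differ in execution.

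For the lower bound on the truncated integrals, the paper argues differently: from $f\in L^1$ and a Chebyshev-type estimate it first shows that each $E^f_\epsilon$ has \emph{finite Lebesgue measure}, and then uses the pointwise bound $x\log x\geq -e^{-1}$ to conclude $\int_{E^f_\epsilon}|f|\log|f|\,d^3v\geq -e^{-1}|E^f_\epsilon|>-\infty$. You instead bound $\bigl||f|\log|f|\bigr|\leq|\log\epsilon|\,|f|$ on $\{\epsilon<|f|<1\}$ and invoke $\|f\|_1<\infty$ directly, never needing the finite-measure observation. Both routes are short; the paper's has the mild advantage of making the role of the universal constant $-e^{-1}$ visible, while yours keeps everything in terms of $\|f\|_1$.

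For the upper bound you are actually more careful than the paper, which simply writes $\infty>\int|f|\log^+|f|\,d^3v$ as an immediate consequence of $f\in L\log L$. Your observation that $s\log^+s$ fails $\Delta_2$ globally, so that membership in $L\log L$ over an infinite measure space only yields $\int\lambda|f|\log^+(\lambda|f|)\,d^3v<\infty$ for \emph{some} $\lambda>0$, and your explicit scale-by-scale reduction using $f\in L^1$ to cover the transition region $\{1\leq|f|<1/\lambda\}$, fills a gap the paper leaves implicit.
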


\begin{proof}
Let $f \in L^1$. For any $\epsilon >0$ we have
\begin{equation}
|f| \chi_{(\epsilon, \infty)}(|f|) \geq\epsilon \chi_{(\epsilon, \infty)}(|f|).
\end{equation}
Since $\int|f|\chi_{(\epsilon, \infty)}(|f|)d^3v \leq \|f\|_1 < \infty$, the set $E^f_{\epsilon}$ must have finite measure. Note that further
\begin{equation}
\int_{E^f_{\epsilon}} |f|\log |f| d^3v \geq - e^{-1} \int_{E^f_{\epsilon}} d^3v > - \infty.
\end{equation}
Moreover, for $f \in L^1 \cap L \log L$,
\begin{equation}
\infty > \int|f| \log^+ |f| d^3v = \int_{E^f_1}|f| \log^+ |f| d^3v + \int_{(E^f_1)^c} |f| \log^+ |f| d^3v 
\end{equation}
$$> \int_{E^f_{\epsilon}}|f| \log^+ |f| d^3v > - \infty.$$
Thus
$$\int_{E^f_{\epsilon}}|f| \log^+ |f| d^3v$$
is well defined.
Clearly, (\ref{L3}) holds and the proof is complete.
\end{proof}

Consequently, using the Orlicz space technique, the continuous entropy $H(f)$ of any velocity distribution function $f \in L^1\cap L\log L$ can be uniformly approximated by distributions (states) 
with well defined continuous entropy.

As in this Section we are concerned with the infinite measure case, we have that $L\log L \neq L\log(L+1)$. Hence the following result is relevant.
\begin{proposition}
\label{oho}
Let $f \in L^1\cap L\log(L+1)$, $f\geq 0$, where both Orlicz spaces are over $(\Rn^3, \Sigma, d^3v)$ ($d^3v$ - the Lebesgue measure). Then
\begin{equation}
\label{L4}
H_{\epsilon}(f)= \int f \log (f+ \epsilon) d^3v 
\end{equation}
is well defined for any $\epsilon >0$
\end{proposition}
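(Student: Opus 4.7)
The plan is to prove the integral is well defined (indeed absolutely convergent) by dominating the integrand $f \log(f+\epsilon)$ pointwise by a linear combination of $f$ and $f\log(f+1)$, both of which are integrable under the stated hypotheses.

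First I would note that since $f \geq 0$ and $\epsilon > 0$, the quantity $f + \epsilon$ is strictly positive, so $\log(f+\epsilon)$ is a real-valued measurable function bounded below by $\log\epsilon$. To get a useful upper bound, I would verify the elementary inequality $f + \epsilon \leq \max(1,\epsilon)(f+1)$ (trivial in both cases $\epsilon \leq 1$ and $\epsilon \geq 1$), so
\begin{equation*}
\log(f+\epsilon) \leq \log\max(1,\epsilon) + \log(f+1).
\end{equation*}
Multiplying by $f \geq 0$ and combining with the trivial lower bound $f\log(f+\epsilon) \geq f\log\epsilon$ then gives a pointwise estimate
\begin{equation*}
|f\log(f+\epsilon)| \leq C_\epsilon \cdot f + f\log(f+1),
\end{equation*}
for some constant $C_\epsilon$ depending only on $\epsilon$.

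The conclusion then follows by integrating: $\int C_\epsilon f\, d^3v = C_\epsilon \|f\|_1 < \infty$ since $f \in L^1$, and $\int f\log(f+1)\, d^3v < \infty$ because $f \in L\log(L+1)$. To make the latter step rigorous one needs to justify that membership in $L\log(L+1)$ (defined via the existence of some scaling parameter $\lambda > 0$) implies finiteness of the unscaled integral. For this I would invoke the $\Delta_2$-condition for the Young's function $x \mapsto x\log(x+1)$; a direct check shows that the ratio $(2x)\log(2x+1)/(x\log(x+1)) = 2\log(2x+1)/\log(x+1)$ remains bounded on $(0,\infty)$ (the limits are $4$ at $0$ and $2$ at $\infty$), so $\Delta_2$ holds globally. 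Iterating $\Delta_2$ then allows one to replace $\lambda$ by $1$, as indicated in the paragraph following Definition 2.2 of the paper.

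The only real subtlety lies in this last step, since the underlying measure space $(\Rn^3, \Sigma, d^3v)$ has infinite measure and hence requires the \emph{global} $\Delta_2$-condition rather than the version ``for large $t$'' that would suffice in the finite measure case. Once the global $\Delta_2$ bound is in hand, everything else reduces to routine domination and absolute integrability.
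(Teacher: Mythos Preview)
Your proof is correct and essentially relies on the same two ingredients as the paper's argument: the hypothesis $f\in L^1$ and the global $\Delta_2$-condition for $t\log(t+1)$, which upgrades membership in $L\log(L+1)$ to finiteness of $\int f\log(f+1)\,d^3v$. Where the approaches differ is in how these ingredients are combined. You dominate $|f\log(f+\epsilon)|$ pointwise by $C_\epsilon f + f\log(f+1)$ and conclude absolute integrability directly. The paper instead sets $\beta=1/\epsilon$ and uses the exact scaling identity
\[
\beta f\log(\beta f+1)=(\beta\log\beta)\,f+\beta\,f\log(f+1/\beta),
\]
so that $\int f\log(f+\epsilon)\,d^3v$ is the difference of two finite integrals. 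Your route has the advantage of yielding absolute integrability (hence unambiguous well-definedness) in one stroke; the paper's identity is slicker algebraically but, as written, only shows the integral is a finite difference of real numbers rather than bounding the modulus of the integrand. Either way the substance is the same.
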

\begin{proof}
Let $f \in L^1\cap L\log(L+1)$ with $f\geq 0$. As both spaces are vector spaces, then $\beta f \in L^1\cap L\log(L+1)$ for an arbitrary $\beta > 0$. It is an exercise to see the Young's functions $t$ and $t\log(t+1)$ both satisfy the $\Delta_2$ condition globally. Hence we even have that $\int \beta f \log(\beta f + 1) d^3v <\infty$. Note that
\begin{equation}
\label{L5}
\int  \beta f \log(\beta f + 1) d^3v = \int (\beta \log \beta) f d^3v + \beta \int f \log(f + \frac{1}{\beta}) d^3v.
\end{equation}
As the LHS of (\ref{L5}) and the first term of the RHS of (\ref{L5}) are finite numbers, the claim follows.
\end{proof}

Let us comment on the above results.
\begin{enumerate}
\item Proposition \ref{oho} implies that for any $f\in  L^1\cap L\log(L+1)$, $f\geq 0$, $H_+(f)$ (so also $H(f)$) can be approximated by finite numbers $H_{\epsilon}(f)$.
\item The important point to note here is the fact that DiPerna-Lions (see \cite{DP1}, \cite{DP2}, and \cite{AV})
showed that the estimates
\begin{equation}
f \in L^{\infty}_t([0,T]; L^1_{x,v}((1 +|v|^2 +|x|^2)dxdv) \cap L\log (L +1))
\end{equation}
and
\begin{equation}
D(f) \in L^1([0,T] \times \Rn^N_x),
\end{equation}
where $D(f) = \frac{1}{4} \int d\Omega \int d^3v_1 d^3v_2 I(g, \theta)|{\bf v}_2 - {\bf v}_1| (f^{\prime}_1 f^{\prime}_2 - f_1f_2) \log\frac{f^{\prime}_1 f^{\prime}_2}{f_1f_2}$, 
are sufficient to build a mathematical theory of weak solutions.

Furthermore, Villani announced, see \cite{Vil}, Chapter 2, Theorem 9, that for particular cross sections (collision kernels in Villani's terminology) weak solutions of Boltzmann equation are in $L\log(L+1)$.
\item Consequently, for the infinite measure case, the condition $f \in L^1 \cap L \log(L+1)$ is well suited to entropic problems associated to Boltzmann's equation.
\end{enumerate}

As the entropic functional $H_+(f)$ plays so important a role in the analysis of Boltzmann's theory, we will continue the examination of its properties.

\begin{proposition}\label{prop4.3}
Let $f \in L^1 \cap L \log(L+1)$ and $f\geq 0$. Then
$$H_+(f) = \int f \log f d^3v$$
is bounded above, and if in addition $f\in L^{1/2}$ (equivalently $f^{1/2}\in L^1$), it is also bounded from below. Thus $H_+(f)$ is bounded below on a dense subset of the positive cone of $L \log(L+1)$.
\end{proposition}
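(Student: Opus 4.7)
The plan is to split $H_+(f)$ according to whether $f \geq 1$ or $f < 1$ and to control each piece using a different space from the hypotheses.

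For the upper bound, I would use the pointwise inequality $t \log t \leq t \log(t+1)$ valid for all $t \geq 0$ (with the convention $0 \log 0 := 0$), which holds because $\log t \leq \log(t+1)$ wherever $t > 0$ while the right-hand side is non-negative everywhere. This yields $(f \log f)^+ \leq f \log(f+1)$ pointwise. Since $t \mapsto t\log(t+1)$ satisfies the $\Delta_2$-condition globally (as noted in the proof of Proposition \ref{oho}), the hypothesis $f \in L\log(L+1)$ forces $\int f \log(f+1)\, d^3v < \infty$, and therefore
\[
H_+(f) \leq \int f\log(f+1)\, d^3v < \infty.
\]

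For the lower bound under the added hypothesis $f^{1/2} \in L^1$, the essential ingredient is the elementary calculus inequality
\[
|t \log t| \leq \tfrac{2}{e}\, t^{1/2}, \qquad 0 \leq t \leq 1,
\]
obtained by locating the unique critical point of $t \mapsto t^{1/2}|\log t|$ on $(0,1]$ at $t = e^{-2}$, where the maximum value $2/e$ is attained. Since $f \log f \geq 0$ on $\{f \geq 1\}$, the negative part of $f \log f$ is supported on $\{f < 1\}$ and dominated there by $(2/e) f^{1/2}$. The hypothesis then yields
\[
\int (f\log f)^-\, d^3v \leq \tfrac{2}{e}\, \|f^{1/2}\|_1 < \infty,
\]
so $H_+(f) \geq -(2/e)\|f^{1/2}\|_1$, and in particular $H_+(f)$ is finite.

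For the density claim, any bounded non-negative function with compact support in $\Rn^3$ belongs simultaneously to $L^1$, $L^{1/2}$ and $L\log(L+1)$, so the preceding two bounds apply to every such function. To see that these functions are dense in the positive cone of $L\log(L+1)$ in the Luxemburg norm, take $f \geq 0$ in $L\log(L+1)$ and consider the truncations $f_n = \min(f,n)\chi_{B_n}$, where $B_n$ is the ball of radius $n$. Then $0 \leq f_n \leq f$ and $f_n \to f$ a.e.; dominated convergence applied to $t \log(t+1)$ evaluated at $|f - f_n|/\lambda$ (dominated by $f\log(f/\lambda + 1)$, integrable for every $\lambda>0$ thanks to global $\Delta_2$) shows that the modular tends to $0$ for every $\lambda > 0$, and $\Delta_2$ again upgrades this to Luxemburg-norm convergence. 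The main obstacle in the whole argument is the calculus inequality underlying the lower bound; once it is at hand the upper bound is immediate, and the density statement follows by the standard Orlicz approximation recipe.
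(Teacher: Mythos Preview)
Your proof is correct, but the lower bound and the density step take a different path from the paper's.

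For the lower bound, the paper applies Jensen's inequality: writing $N=\int f^{1/2}\,d^3v$ and using the probability measure $N^{-1}f^{1/2}\,d^3v$ together with the convexity of $t\mapsto t\log t$, one obtains
\[
\int f\log f\,d^3v = 2\int (f^{1/2}\log f^{1/2})\,f^{1/2}\,d^3v \geq 2\Bigl(\int f\,d^3v\Bigr)\log\!\Bigl(\tfrac{\int f\,d^3v}{N}\Bigr).
\]
Your pointwise calculus estimate $|t\log t|\leq (2/e)\,t^{1/2}$ on $[0,1]$ is more elementary and yields the cleaner explicit bound $H_+(f)\geq -(2/e)\|f^{1/2}\|_1$; the paper's Jensen argument, on the other hand, is the one that transfers directly to the noncommutative companion result (Proposition~\ref{prop6.8}) via generalized singular values.

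For density, the paper truncates with $E_n=\{1/n\leq f\leq n\}$ and invokes order-continuity of the $L\log(L+1)$-norm (a consequence of the global $\Delta_2$-condition for the $N$-function $t\log(t+1)$), explicitly noting that this formulation carries over verbatim to the noncommutative setting. Your ball truncation $\min(f,n)\chi_{B_n}$ with a direct dominated-convergence argument is equally valid here and slightly more concrete, but it is tied to the geometry of $\Rn^3$ and would not translate to the von Neumann algebra context of Section~6. The upper-bound arguments are essentially the same: the paper writes $H_+(f)\leq H_\epsilon(f)$ and appeals to Proposition~\ref{oho}, which is exactly your $\epsilon=1$ case.
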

\begin{proof}
As $x \mapsto \log x$, $x>0$ is a monotonic function, we have $x\log x \leq x\log(x+ \epsilon)$, for any $\epsilon >0$.
Hence
$$H_+(f) \leq H_{\epsilon}(f).$$
To examine boundedness from below, let $0\leq f \in L^1 \cap L\log(L+1)$ be such that $f^{\frac{1}{2}} \in L^1$. Then
denoting $\int f^{\frac{1}{2}}d^3v$ by $N$, one has
\begin{equation}
\int f \log f d^3v = \int f^{\frac{1}{2}} \cdot f^{\frac{1}{2}} \log(f^{\frac{1}{2}} \cdot f^{\frac{1}{2}}) d^3v
=2 \int(f^{\frac{1}{2}} \log(f^{\frac{1}{2}})) \cdot f^{\frac{1}{2}}d^3v \end{equation}
$$\geq 2 (\int f d^3v) \log(\frac{\int f d^3v}{N}),$$
where the last inequality follows from Jensen's inequality. (See \cite{rud} for a very general version of this inequality.) 

The last part of the claim will be established if we can show that each nonnegative element of $L^1 \cap L \log(L+1)$ is the norm limit of a sequence of functions with support having finite measure. We present a very general proof of this fact which can be directly translated to the noncommutative context. Let $f\in L^1 \cap L \log(L+1)$ be given with $f\geq 0$, and for any $n\in \mathbb{N}$ let $E_n=\{v | \tfrac{1}{n}\leq f(v)\leq n\}$. By \cite[Corollary 3.3]{L} the sequence $\{f\chi_{E_n}\}$ will in fact converge to $f$ in the $L^1$-norm. Next notice that the Young's function $\Psi(t) = t\log(t+1)$ generating $L \log(L+1)$ is actually an $N$-function. (That means that the limit formulae $\lim_{t\to 0}\frac{\Psi(t)}{t}=0$ and $\lim_{t\to \infty}\frac{\Psi(t)}{t}=\infty$ are valid.) It is an exercise to see that $\Psi(t)$ also satisfies the $\Delta_2$-condition globally. Hence by \cite[Remark 6.8]{LM}, $L\log(L+1)$ must have order-continuous norm. (The remark referred to assumes that $N$-functions are in view.) Since $f\chi_{E_n}$ increases pointwise to $f$ as $n\to\infty$, the order continuity of the norm ensures that $f\chi_{E_n}$ converges to $f$ in the $L\log(L+1)$-norm as $n\to\infty$. Thus $f\chi_{E_n}$ converges to $f$ in the norm on $L^1 \cap L\log(L+1)$.
\end{proof}

To sum up, \textit{ the proposed approach is compatible with a rigorous analysis of Boltzmann's equation}.

In the last two Sections we have shown that the scheme for classical statistical mechanics based on the two distinguished 
Orlicz spaces $\langle L^{\cosh -1}, L\log(L+1)\rangle $ does work. However, the basic theory for Nature is Quantum Mechanics. Therefore the question of a quantization of the given approach must be considered. This will be done in the next two Sections. To facilitate the procedure of quantization, although up to now only classical systems have been considered, we have deliberately tried to formulate our arguments in as general a way as possible.

\section{Non-commutative Orlicz spaces}\label{s5}
For reader's convenience we start this section by presenting a brief review of quantum (noncommutative) Orlicz spaces extracted from Section 2 in \cite{LM}.

Let $\Phi$ be a given Young's function. In the context of semifinite von Neumann algebras 
$\cM$ equipped with an fns (faithful normal semifinite) trace $\tau$, the space of all $\tau$-measurable operators 
$\widetilde{\cM}$ (equipped with the topology of convergence in measure) plays the role of 
$L^0$ (for details see \cite{Ne}). In this  case, Kunze \cite{Kun} 
used this identification to define the associated noncommutative Orlicz space to be 
$$L^{ncO}_{\Phi}{} = \cup_{n=1}^\infty n\{f \in \widetilde{\cM} : \tau(\Phi(|f|) \leq 1\}$$ 
and showed that this is a linear space which becomes a Banach space when equipped with the 
Luxemburg-Nakano norm $$\|f\|_\Phi = \inf\{\lambda > 0 : \tau(\Phi(|f|/\lambda)) \leq 
1\}.$$Using the linearity it is not hard to see that $$L^{ncO}_{\Phi}{} = \{f \in 
\widetilde{\cM} : \tau(\Phi(\lambda|f|)) < \infty  \quad \mbox{for some} \quad \lambda = 
\lambda(f) > 0\}.$$ 
Thus there is a clear analogy with the commutative case.

It is worth pointing out that
there is another approach to Quantum Orlicz spaces. Namely, one can replace $(\cM, \tau)$ by $(\cM, \varphi)$, where $\varphi$ is a normal faithful state on $\cM$ (for details  see \cite{AlRZ}).
However, as we wish to put some emphasis on the universality of quantization,
we prefer to follow the Banach space theory approach developed by Dodds, Dodds and de Pagter \cite{DDdP} .

Given an element $f \in \widetilde{\cM}$ and $t \in [0, \infty)$, the generalized singular 
value $\mu_t(f)$ is defined by $\mu_t(f) = \inf\{s \geq 0 : \tau(\I - e_s(|f|)) \leq t\}$ 
where $e_s(|f|)$ $s \in \mathbb{R}$ is the spectral resolution of $|f|$. The function $t \to 
\mu_t(f)$ will generally be denoted by $\mu(f)$. For details on the generalized singular values 
see \cite{FK}.  (This directly extends classical notions where for any $f \in L^0_{\infty}{}$, 
the function $(0, \infty) \to [0, \infty] : t \to \mu_t(f)$ is known as the decreasing 
rearrangement of $f$.) We proceed to briefly review the concept of a Banach Function Space of 
measurable functions on $(0, \infty)$. (Necessary background is given in \cite{DDdP}.) A function norm 
$\rho$ on $L^0(0, \infty)$ is defined to be a mapping $\rho : L^0_+ \to [0, \infty]$ satisfying
\begin{itemize}
\item $\rho(f) = 0$ iff $f = 0$ a.e.  
\item $\rho(\lambda f) = \lambda\rho(f)$ for all $f \in L^0_+, \lambda > 0$.
\item $\rho(f + g) \leq \rho(f) + \rho(g)$ for all .
\item $f \leq g$ implies $\rho(f) \leq \rho(g)$ for all $f, g \in L^0_+$.
\end{itemize}
Such a $\rho$ may be extended to all of $L^0$ by setting $\rho(f) = \rho(|f|)$, in which case 
we may then define $L^{\rho}(0, \infty) = \{f \in L^0(0, \infty) : \rho(f) < \infty\}$. If 
now $L^{\rho}(0, \infty)$ turns out to be a Banach space when equipped with the norm 
$\rho(\cdot)$, we refer to it as a Banach Function space. If $\rho(f) \leq \lim\inf_n(f_n)$ 
whenever $(f_n) \subset L^0$ converges almost everywhere to $f \in L^0$, we say that $\rho$ 
has the Fatou Property. If less generally this implication only holds for $(f_n) \cup \{f\} 
\subset L^{\rho}$, we say that $\rho$ is lower semi-continuous. If further the situation $f 
\in L^\rho$, $g \in L^0$ and $\mu_t(f) = \mu_t(g)$ for all $t > 0$, forces $g \in L^\rho$ and 
$\rho(g) = \rho(f)$, we call $L^{\rho}$ rearrangement invariant (or symmetric). Using the 
above context Dodds, Dodds and de Pagter \cite{DDdP} formally defined the noncommutative space 
$L^\rho(\widetilde{\cM})$ to be $$L^\rho(\widetilde{\cM}) = \{f \in \widetilde{\cM} : \mu(f) \in 
L^{\rho}(0, \infty)\}$$and showed that if $\rho$ is lower semicontinuous and $L^{\rho}(0, 
\infty)$ rearrangement-invariant, $L^\rho(\widetilde{\cM})$ is a Banach space when equipped 
with the norm $\|f\|_\rho = \rho(\mu(f))$. 

Now for any Young's function $\Phi$, the Orlicz 
space $L^\Phi(0, \infty)$ is known to be a rearrangement invariant Banach Function space 
with the norm having the Fatou Property, see Theorem 8.9 in \cite{BS}. Thus on selecting $\rho$ to be 
 $\|\cdot\|_\Phi$, the very general framework of Dodds, 
Dodds and de Pagter presents us with an alternative approach to realising noncommutative 
Orlicz spaces.

As the von Neumann entropy is defined on $\M=B(H)$ we end this Section with a description of the Banach Function spaces for $B(H)$ 
which are constructed using the philosophy of Dodds, Dodds and de Pagter described above.
Let $\M=B(H)$ equipped with the standard trace $\mathrm{Tr}$. Then  $\tM = B(H)$ \cite{Te1}. Let $n$ be a non-negative integer and let $b\in B(H)$ be given. Since $\mathrm{Tr}$ is integer-valued on the projection lattice of $B(H)$, it follows from \cite[Proposition 2.4]{FK} that $\mu_t(b)=\mu_n(b)=a_{n+1}(b)$ for any $t\in [n,n+1)$, where $a_{n+1}$ is the distance from $b$ to the operators with rank 
at most $n$ (the so-called $(n+1)$-th approximation number of $b$ \cite{Pie}). Of course $b$ will be compact if and only if $a_n(b)\to 0$ as $n\to \infty$. If indeed $b$ is compact, then by a result of Allahverdiev (cf. \cite[Theorem II.2.1]{GK}), the $a_n(b)$'s correspond 
to the elements of the spectrum of $|b|$ arranged in decreasing order according to multiplicity. Given a Banach Function norm $\rho$, the 
prescription given above (cf. \cite{DDdP}) says that $b\in L^\rho(B(H))$ if and only if $\mu(b)\in L^\rho(0,\infty)$, with the norm on $L^\rho(B(H))$ given by $\|b\|_\rho=\|\mu(b)\|_\rho$. Now let $\Phi$ be a Young's function. Then $b\in L^\Phi(B(H))$ if and only if $\mu(b)\in L^\Phi(0,\infty)$ if and only if there exists some $\alpha>0$ so that 
$\int_0^\infty \Phi(\alpha\mu_t(b))\,dt = \sum_{n=0}^\infty\Phi(\alpha(a_n(b))<\infty$ if and only if $\{a_n(b)\}$ belongs to the Orlicz sequence space $\ell_\Phi(\mathbb{N})$. Similarly the Luxemburg norm of $b\in L^\Phi(B(H))$ can then be shown to be precisely $\|b\|_\Phi= \inf\{\epsilon>0:\sum_{n=0}^\infty\Phi(a_n(b)/\epsilon) \leq 1\}$ (the Luxemburg norm of $\{a_n(b)\}$ considered as an element of $\ell_\Phi(\mathbb{N})$.

\section{Non-commutative regular systems}

In \cite{LM} the definition of non-commutative regular system was given. To quote this method of quantization we need some preparation (cf \cite{LM}). Let
 $({\M}, \tau)$ be a pair consisting of a semifinite von Neumann algebra and fns trace.
 
 \begin{remark}
 \begin{enumerate}
 \item For large quantum systems, i.e. for systems with an infinite number of degrees of freedom, type III factors are of paramount interest. Namely (cf \cite{Haag})
 representations of quasilocal algebras induced by an equilibrium state as well as local algebras of relativistic theory in the so called the vacuum sector, lead to type III factors. However by using crossed-product techniques (cf \cite{Te1}), one arrives at semifinite algebras.
 \item Since in the models of Quantum Physics von Neumann algebras act on separable Hilbert spaces, one can restrict oneself to $\sigma$-finite algebras (cf \cite{BR}).
 The advantage of this assumption follows from the fact that it allows for a simplification of the crossed-product technique in that here one can more easily select the elements of the original algebra (cf \cite{wata})
 \end{enumerate}
 \end{remark}
 
 Consequently, {\it the assumption that $\M$ is a semifinite algebra acting on a separable Hilbert space is not too restrictive} as one can consider instead of $\cN$ (factor III) the corresponding crossed product $\M= \cN \rtimes_{\sigma}\Rn$ with a nice identification of $\cN$ inside $\M$.

 To provide the promised preliminaries let us define (see \cite{Tak}, vol. I):
\begin{enumerate}
\item $n_{\tau} = \{ x \in {\M}: \tau(x^*x) < + \infty \}.$
\item ({\it definition ideal of the trace} $\tau$) $m_{\tau} = \{xy: x,y \in n_{\tau} \}.$
\item $\omega_x(y) = \tau(xy), \quad x\geq 0.$
\end{enumerate}

One has (for details see Takesaki, \cite{Tak}, vol. I)
\begin{enumerate}
\item if $x \in m_{\tau}$, and $x\geq 0$, then $\omega_x \in \M_*^+$. 
\item If $L^1(\M,\tau)$ stands for the completion of $(m_{\tau}, ||\cdot||_1)$ then $L^1(\M, \tau)$ is isometrically isomorphic to $\M_*$.
\item $\M_{*,0} \equiv \{ \omega_x : x \in m_{\tau} \}$ is norm dense in $\M_*$.
\end{enumerate}
Finally, denote by $\M_*^{+,1}$ ($\M_{*,0}^{+,1}$) the set of all normalized normal positive functionals in $\M_*$ (in $\M_{*,0}$ respectively).
Now, performing a ``quantization'' of Definition \ref{clasmodel} we arrive at (cf \cite{LM})
\begin{definition}
\label{qsm}
The noncommutative statistical model consists of a quantum measure space $(\M, \tau)$, ``quantum densities with 
respect to $\tau$'' in the form of $\M_{*,0}^{+,1}$, and
the set of $\tau$-measurable operators $\tM$.
\end{definition}

 Having ``quantized'' the statistical model, we can present the definition of regular noncommutative statistical model (\cite{LM}).
\begin{definition}
\label{kwant}
\begin{equation}
L^{quant}_x = \{ g \in \tM: \quad 0 \in D(\widehat{\mu_{x}^g(t)})^0, \quad x \in m_{\tau}^+ \},
\end{equation}
where $D(\cdot)^0$ stands for the interior of the domain $D(\cdot)$ and
\begin{equation} 
\widehat{\mu_{x}^g(t)} = \int \exp(t\mu_s(g)) \mu_s(x)ds, \qquad t\in\mathbb{R}.
\end{equation}
({\it Notice that the requirement that $0 \in D(\widehat{\mu_{x}^g(t)})^0$, presupposes that the transform $\widehat{\mu_{x}^g(t)}$ is well-defined in a neighborhood of the origin.})
\end{definition}

We remind that above and in the sequel $\mu(g)$ ($\mu(x)$)  stands for the function $[0,\infty) \ni t \mapsto \mu_t(g) \in [0, \infty]$ 
($[0,\infty) \ni t \mapsto \mu_t(x) \in [0, \infty]$ respectively).

To give a non-commutative generalization of the Pistone-Sempi theorem, we need a generalization of the Dodds, Dodds, de Pagter approach, i.e. the approach which was presented in Section 4. To this end we need \cite{LM}

\begin{definition}
\label{def5.4}
Let $x \in L_+^1(\M, \tau)$ and let $\rho$ be a Banach function norm on $L^0((0, \infty), \mu_t(x)dt)$. In the spirit of \cite{DDdP} we then formally define the weighted noncommutative Banach function space $L^{\rho}_{x}(\tM)$ to be the collection of all $f \in \tM$ for which $\mu(f)$ belongs to $L^{\rho}((0, \infty), \mu_t(x)dt)$. For any such $f$ we write $\|f\|_\rho = \rho(\mu(f))$.
\end{definition}

\begin{remark}
The classical statistical model is constructed using objects of the form $f \cdot d\nu$. A faithful noncommutative translation of this would be to look at objects of the form $\tau(x^{\frac{1}{2}} \cdot x^{\frac{1}{2}}) = \int \mu_t(x^{\frac{1}{2}}\cdot x^{\frac{1}{2}}) dt$. However it is convenient for us to rather use the related objects 
$\tau_x(\cdot) = \int \mu_t(\cdot) \mu_t(x)dt$. These two objects are clearly closely related, with $\tau_x$ having the advantage of exhibiting many trace-like properties.
\end{remark}
The mentioned generalization of the Dodds, Dodds, de Pagter approach is contained in:
\begin{theorem}\cite{LM}
\label{ncbf}
Let $x \in L_+^1(\M, \tau)$. Let $\rho$ be a rearrangement-invariant Banach function norm on $L^0((0, \infty), \mu_t(x)dt)$ which satisfies the Fatou property and such that:
$\nu(E) < \infty \Rightarrow \rho(\chi) < \infty$ and $\nu (E) < \infty \Rightarrow \int_E f d\nu \leq C_E \rho(f)$
for some positive constant $C_E$, depending on $E$ and $\rho$ but independent of $f$ ($\nu$ stands for $\mu_t(x)dt$).
 Then $L^{\rho}_{x}(\tM)$ is a linear space and $\|\cdot\|_\rho$ a norm. Equipped with the norm $\|\cdot\|_\rho$, $L^{\rho}_{x}(\tM)$ is a Banach space which injects continuously into $\tM$.
\end{theorem}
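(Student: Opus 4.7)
I would adapt the Dodds-Dodds-de Pagter framework to the weighted measure $\nu = \mu_t(x)\,dt$, organizing the argument in three stages.

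\emph{Stage 1: linear structure and norm axioms.} Positive homogeneity is immediate from $\mu(\lambda f) = |\lambda|\mu(f)$. For closure under addition and the triangle inequality, the natural starting point is the Fack-Kosaki inequality $\mu_{s+t}(f+g) \leq \mu_s(f) + \mu_t(g)$, which yields the Hardy-Littlewood submajorization
\[
\int_0^r \mu_s(f+g)\,ds \;\leq\; \int_0^r \mu_s(f)\,ds + \int_0^r \mu_s(g)\,ds \qquad (r>0).
\]
Combining this with the rearrangement invariance and Fatou property of $\rho$ (which together yield a Hardy-Littlewood-Polya majorization principle for $\rho$, now relative to $\nu$), one derives $\rho(\mu(f+g)) \leq \rho(\mu(f)) + \rho(\mu(g))$, simultaneously giving closure under addition and subadditivity. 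For separation, $\rho(\mu(f)) = 0$ forces $\mu_t(f) = 0$ for $\nu$-a.e. $t$, and the first hypothesis guarantees that $\nu$ does not vanish on small intervals $(0,t)$; combined with the right-continuity and monotonicity of $\mu(f)$ this forces $\mu(f) \equiv 0$ and hence $f = 0$.

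\emph{Stage 2: continuous injection into $\tM$.} Fix $t > 0$ small enough that $\mu_t(x) > 0$ (possible whenever $x \neq 0$), and apply the second hypothesis to $E = (0,t)$:
\[
t\,\mu_t(x)\,\mu_t(f) \;\leq\; \int_0^t \mu_s(f)\mu_s(x)\,ds \;\leq\; C_t\,\|f\|_\rho.
\]
Hence $\mu_t(f) \leq C_t\,\|f\|_\rho/(t\mu_t(x))$. Since the measure topology on $\tM$ is exactly the topology generated by the seminorms $f \mapsto \mu_t(f)$, this is continuity of the embedding.

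\emph{Stage 3: completeness.} Given a Cauchy sequence $(f_n)$ in $\|\cdot\|_\rho$, Stage 2 shows it is Cauchy in measure, hence $f_n \to f$ in $\tM$ for some $f \in \tM$. Choose a subsequence with $\|f_{n_{k+1}} - f_{n_k}\|_\rho < 2^{-k}$. Convergence in measure implies $\mu_t(f_{n_j} - f_{n_k}) \to \mu_t(f - f_{n_k})$ as $j \to \infty$ for all but countably many $t$; the Fatou property of $\rho$ then gives
\[
\rho(\mu(f - f_{n_k})) \;\leq\; \liminf_{j\to\infty} \rho(\mu(f_{n_j} - f_{n_k})) \;\leq\; 2^{1-k},
\]
so $f_{n_k} \to f$ in $\|\cdot\|_\rho$, and hence $f_n \to f$; the Stage 1 linearity places $f$ in $L^\rho_x(\tM)$.

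The main obstacle is Stage 1, specifically the triangle inequality. The classical rearrangement-invariant proof passes from Hardy-Littlewood submajorization (taken with respect to Lebesgue measure) to a $\rho$-norm inequality via the fact that $\mu(f)$ is its own decreasing rearrangement. With the weighted measure $\nu$, one must carefully relate the $\nu$-decreasing rearrangement of $\mu(f)$ to $\mu(f)$ itself, and this is precisely where the technical domination hypothesis $\int_E f\,d\nu \leq C_E\,\rho(f)$ is essential: it plays the role of a weighted Hardy-type inequality that ensures submajorization continues to control $\rho$. A secondary point is verifying the pointwise a.e. convergence of singular value functions under convergence in measure (used in the Fatou step of completeness); this is standard from \cite{FK} but worth being explicit about.
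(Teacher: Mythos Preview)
The paper does not actually prove this theorem; it is quoted from \cite{LM} without argument, so there is no in-paper proof to compare against. Your three-stage plan is the standard Dodds--Dodds--de Pagter argument transported to the weighted measure $\nu=\mu_t(x)\,dt$, and this is almost certainly what the cited reference does.

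One correction to your diagnosis of the obstacle in Stage~1. The passage from the Fack--Kosaki Lebesgue-measure submajorization $\mu(f+g)\prec \mu(f)+\mu(g)$ to $\nu$-submajorization does \emph{not} rest on the domination hypothesis $\int_E f\,d\nu\le C_E\rho(f)$; it rests on the fact (noted just after the theorem in the paper) that the density $t\mapsto\mu_t(x)$ is itself non-increasing. An Abel summation shows that if $h_1\prec h_2$ in the Lebesgue sense with both non-increasing, and $w\ge 0$ is non-increasing, then $\int_0^r h_1 w\,dt\le\int_0^r h_2 w\,dt$ for all $r>0$; since a non-increasing $h$ has $\nu$-rearrangement $h\circ F^{-1}$ with $F(r)=\nu((0,r))$, this is precisely $\nu$-submajorization, after which rearrangement invariance plus the Fatou property give the Hardy--Littlewood--P\'olya principle and hence the triangle inequality. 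The two conditions in the statement are just the standard Bennett--Sharpley axioms for a Banach function norm (\cite[Definition~I.1.1]{BS}); the domination hypothesis is used exactly where you invoke it in Stage~2, and the finiteness of $\rho(\chi_E)$ is what puts characteristic functions (hence enough simple functions) into $L^\rho$, not what prevents $\nu$ from vanishing near the origin---that comes simply from $x\neq 0$. With this adjustment your plan is sound.
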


and the generalization of the Pistone-Sempi is given by  ( see \cite{LM})

\begin{theorem}\label{QPS}
The set $L^{quant}_x$  coincides with the the weighted Orlicz space $L_x^{\cosh - 1}(\tM) \equiv L^{\Psi}_{x}(\tM)$ (where $\Psi = \cosh -1$) of noncommutative regular random variables.
\end{theorem}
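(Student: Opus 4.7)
My plan is to reduce the defining condition of $L^{quant}_x$ to the Orlicz integrability condition for the Young's function $\Psi = \cosh - 1$, via two straightforward steps.

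The first step handles the interior-of-domain condition. Since $\mu_s(g)\geq 0$ for every $s$ and since $x\in m_\tau^+\subset L^1(\M,\tau)$ gives $\int_0^\infty \mu_s(x)\,ds = \tau(x) < \infty$, the bound $e^{t\mu_s(g)}\leq 1$ for $t\leq 0$ forces $\widehat{\mu_x^g}(t)\leq \tau(x) < \infty$ for every such $t$. Moreover, for each fixed $s$ the map $t\mapsto e^{t\mu_s(g)}$ is nondecreasing on $[0,\infty)$, so the existence of even one $\lambda>0$ with $\widehat{\mu_x^g}(\lambda)<\infty$ already forces finiteness on the whole interval $(-\infty,\lambda]$. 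Consequently $0\in D(\widehat{\mu_x^g})^0$ is equivalent to the existence of some $\lambda>0$ with $\int_0^\infty e^{\lambda\mu_s(g)}\mu_s(x)\,ds < \infty$, and, after subtracting the finite constant $\tau(x)$, equivalent to
$$\int_0^\infty \bigl(e^{\lambda\mu_s(g)}-1\bigr)\mu_s(x)\,ds < \infty.$$

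The second step replaces $e^u-1$ by $\cosh u - 1$. Writing $\cosh u - 1 = \tfrac{1}{2}(e^u - 2 + e^{-u})$ and using $0\leq 1 - e^{-u}\leq 1$ on $[0,\infty)$ yields the elementary sandwich
$$2(\cosh u - 1)\ \leq\ e^u - 1\ \leq\ 2(\cosh u - 1) + 1 \qquad (u\geq 0).$$
Integrating against $\mu_s(x)\,ds$ and absorbing the additive $1$ into the finite quantity $\tau(x)$, the previous condition is equivalent (for the same $\lambda>0$) to
$$\int_0^\infty \bigl(\cosh(\lambda\mu_s(g))-1\bigr)\mu_s(x)\,ds < \infty.$$
By Definition \ref{def5.4} and the Luxemburg-Nakano prescription, this is precisely the statement that $\mu(g)\in L^{\cosh-1}((0,\infty),\mu_t(x)\,dt)$, i.e.\ $g\in L^\Psi_x(\tM)$.

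Chaining the equivalences from the two steps yields $L^{quant}_x = L^\Psi_x(\tM)$. I do not expect any serious obstacle: the argument is essentially a weighted, singular-value transcription of the classical Pistone-Sempi theorem. The only small subtlety is the interior-of-domain bookkeeping, which collapses thanks to the positivity of $\mu_s(g)$ and the finiteness of the weight measure $\mu_t(x)\,dt$, both of which are baked into the hypotheses $g\in\tM$ and $x\in m_\tau^+$.
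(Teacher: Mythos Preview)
Your argument is correct. The paper itself does not supply a proof of this theorem; it simply cites \cite{LM}, so there is no in-text proof to compare against. That said, your reduction is precisely the natural one: the nonnegativity of the generalized singular values $\mu_s(g)$ collapses the two-sided interior-of-domain condition to a one-sided bound at some $\lambda>0$, the finiteness of the weight measure $\int_0^\infty\mu_s(x)\,ds=\tau(x)<\infty$ (which follows from $x\in m_\tau^+$) lets you freely add or subtract constants inside the integrand, and the elementary comparison between $e^u-1$ and $\cosh u-1$ on $[0,\infty)$ finishes the identification with the Orlicz condition of Definition~\ref{def5.4}. This is exactly the weighted, singular-value transcription of the classical Pistone--Sempi argument, and it is what one would expect the proof in \cite{LM} to be.
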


Finally, to show that statistics and thermodynamics can be well established for noncommutative regular statistical systems, we note that 
for elements $x\in L^1_+(\mathcal{M})$, $\mu_t(x)dt$ gives a finite resonant measure on $(0, \infty)$. To see this, it is enough to observe that $t \mapsto \mu_t(\cdot)$
is a nonincreasing and right continuous function (see \cite{FK}). Note, this property of $\mu_t(\cdot)dt$ simplifies the theory of rearrangement-invariant Banach function spaces. In particular, one can easily apply the scheme given in Section 2.
Moreover, both of the spaces $L{\log(L+1)}(\tM)$ and $L{\log L}(\tM)$ are suitable frameworks within which to study the quantum entropy $\tau(f\log(f))$. We justify this claim by first proving a quantum version of Propositions \ref{oho} and \ref{prop4.3}.

\begin{proposition}\label{prop6.8}
Let $\cM$ be a semifinite von Neumann algebra with an fns trace $\tau$ (cf Section 4) and let $f \in L^1\cap L\log(L+1)(\tM)$, 
$f\geq 0$. Then
$\tau(f\log(f + \epsilon))$ is well defined for any $\epsilon >0$. Moreover
$$\tau(f \log f)$$
is bounded above, and if in addition $f\in L^{1/2}$ (equivalently $f^{1/2}\in L^1$), it is also bounded from below. Thus $\tau(f \log f)$ 
is bounded below on a dense subset of the positive cone of $L \log(L+1)$.
\end{proposition}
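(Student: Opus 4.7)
The plan is to reduce everything to the classical statements of Propositions \ref{oho} and \ref{prop4.3} via generalized singular value functions, and then to handle the density clause by means of spectral truncations.

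First I would observe that $f\geq 0$ and $f\in L^1\cap L\log(L+1)(\tM)$ imply that $\mu(f)\in L^1(0,\infty)\cap L\log(L+1)(0,\infty)$ with respect to Lebesgue measure, directly from the definitions in Section \ref{s5}. For any Borel function $h\colon [0,\infty)\to \mathbb{R}$ with $h(0)=0$ such that both positive and negative parts of $h(f)$ are $\tau$-integrable, one has
$$\tau(h(f)) = \int_0^\infty h(\mu_t(f))\,dt,$$
obtained by transferring the spectral integral against $d(\tau\circ E)$ to Lebesgue measure through the decreasing rearrangement. Applied with $h(x)=x\log(x+\epsilon)$ and with $h(x)=x\log x$ (split into positive and negative parts), this identity immediately reduces the three pointwise assertions of the proposition to their classical counterparts: well-definedness of $\tau(f\log(f+\epsilon))$ follows from Proposition \ref{oho}, the upper bound on $\tau(f\log f)$ from the first half of Proposition \ref{prop4.3}, and the lower bound under $f^{1/2}\in L^1$ from the Jensen step in its second half (using the spectral identity $\mu(f^{1/2})=\mu(f)^{1/2}$).

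For the density clause I would replace the classical truncation $f\chi_{E_n}$ by $f_n = f e_n$, where $e_n = \chi_{[1/n,n]}(f)$ is the spectral projection. Since $f\geq (1/n)e_n$, the bound $\tau(e_n)\leq n\|f\|_1$ shows that $e_n$ has finite trace; combined with $\|f_n^{1/2}\|_\infty\leq \sqrt{n}$ this gives $f_n^{1/2}\in L^1$. The sequence $f_n$ increases to $f$ in $\tM$, so $\mu_t(f_n)\nearrow \mu_t(f)$ off a null set. Because $\Psi(t)=t\log(t+1)$ is an $N$-function satisfying the $\Delta_2$-condition globally, \cite[Remark 6.8]{LM} yields order continuity of the norm on $L\log(L+1)(\tM)$; together with the analogous property for $L^1(\tM)$ this forces $\|f_n-f\|_{L^1\cap L\log(L+1)(\tM)}\to 0$.

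The main obstacle will be legitimizing the trace-integral identity for $h(x)=x\log x$, which changes sign on $(0,1)$ and is therefore not directly amenable to monotone-class arguments. I would decompose $h=h_+ - h_-$ and control each part using the elementary bound $|x\log x|\leq x\log(x+1)+e^{-1}$ on $[0,\infty)$ together with the integrability of $\mu_t(f)\log(\mu_t(f)+1)$ furnished by $f\in L\log(L+1)(\tM)$, so that each spectral integral makes sense separately. A secondary technical point is the order-continuity step underpinning the density argument, but this is routine once the $N$-function and global $\Delta_2$ properties of $\Psi$ have been verified.
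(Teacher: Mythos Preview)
Your approach is essentially the paper's own: reduce to the classical Propositions \ref{oho} and \ref{prop4.3} via $\mu(f)$ and the identity $\tau(g(f))=\int_0^\infty g(\mu_t(f))\,dt$ from \cite[Remark 3.3]{FK} for non-negative $g$ with $g(0)=0$, invoke $\mu(f^{1/2})=\mu(f)^{1/2}$ (\cite[Lemma 2.5]{FK}) for the Jensen lower bound, and carry out the spectral truncation $f_n=f\chi_{[1/n,n]}(f)$ for the density clause (which the paper itself only asserts ``can be directly translated'' from the proof of Proposition \ref{prop4.3}).

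One point to tighten: your bound $|x\log x|\leq x\log(x+1)+e^{-1}$ does \emph{not} give integrability of $h_-(\mu_t(f))$ on $(0,\infty)$, since $\int_0^\infty e^{-1}\,dt=\infty$. What you actually need, and what the paper does by splitting at the spectral projection $\chi_{[1,\infty)}$, is the one-sided estimate $h_+(x)=x\log^+x\leq x\log(x+1)$, which guarantees $\tau(h_+(f))<\infty$; the negative part may genuinely diverge, and the paper explicitly allows $\tau(f\log f)=-\infty$ in the upper-bound clause. With that adjustment your argument is correct and matches the paper's.
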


\begin{proof}
Let $f \in L^1\cap L\log(L+1)(\tM)$ be given with $f\geq 0$. From the discussion in Section \ref{s5}, we know that this forces $\mu(f) \in L^1\cap L\log(L+1)(0,\infty)$. Notice that a similar argument to the one used in the proof of Proposition \ref{oho}, can now be used to show that $\int_0^\infty|\mu_t(f)\log(\mu_t(f)+\epsilon)|\,dt<\infty$ for any $\epsilon$. Since $t\to \mu_t(f)$ is non-increasing, the fact that $\mu(f) \in L^1(0,\infty)$, ensures that $\mu_t(f)$ decreases to zero as $t\to\infty$. Hence we also have that $\tau(g(f)) = \int_0^\infty g(\mu_t(f))\,dt$ for any non-negative Borel function $g$ with $g(0)=0$ (see \cite[Remark 3.3]{FK}). If we combine this with the above observation regarding $\mu(f)$, it follows that for any $\epsilon>0$ we have 
$$\tau(|f\log(f+\epsilon)|)=\int_0^\infty|\mu_t(f)\log(\mu_t(f)+\epsilon)|\,dt<\infty.$$This proves the first claim. 

To prove the second claim, fix some $\epsilon>0$. Using the fact that $x \to \log(x)$ ($x>0$) is monotonic, we may conclude from the Borel functional calculus that $f\log(f) \leq f\log(f+\epsilon)$. Let $\chi_I$ denote the spectral projection of $f$ corresponding to the interval $I$. Since 
$\log$ is non-negative on $[1,\infty)$, it therefore follows from the above inequality that 
$0\leq f\chi_{[1,\infty)}\log(f\chi_{[1,\infty)}) \leq f\chi_{[1,\infty)}\log(f\chi_{[1,\infty)}+\epsilon)\leq |f\log(f+\epsilon)|$. Hence $0\leq \tau(f\chi_{[1,\infty)}\log(f\chi_{[1,\infty)}))\leq \tau(|f\log(f+\epsilon)|)<\infty$. Notice that $0\geq f\chi_{[0,1)}\log(f\chi_{[0,1)})$. We may therefore give meaning to $\tau(f\chi_{[0,1)}\log(f\chi_{[0,1)}))$ by setting $\tau(f\chi_{[0,1)}\log(f\chi_{[0,1)}))= -\tau(-f\chi_{[0,1)}\log(f\chi_{[0,1)}))$ and to $\tau(f\log(f))$ by setting $\tau(f\log(f))=\tau(f\chi_{[0,1)}\log(f\chi_{[0,1)})+ \tau(f\chi_{[1,\infty)}\log(f\chi_{[1,\infty)}))$. Then $\tau(f\log(f))$ is well defined (possibly assuming the value $-\infty$), and bounded above by $\tau(|f\log(f+\epsilon)|)$.

It remains to prove the final claim. To this end let $0\leq f \in L^1 \cap L\log(L+1)(\tM)$ be such that $f^{\frac{1}{2}} \in L^1(\tM)$. We have already observed that then $\mu(f)\in L^1 \cap L\log(L+1)(0,\infty)$. Since $\mu(f)^{1/2}=\mu(f^{1/2})$ (see \cite[Lemma 2.5]{FK}), the assumption regarding $f^{1/2}$ similarly ensures that $\mu(f)^{1/2}\in L^1(0,\infty)$. A similar argument to the one used in the proof of Proposition \ref{prop4.3} now ensures that in this case 
$$-\infty < \int_0^\infty\mu_t(f)\log(\mu_t(f))\,dt<\infty.$$
Hence by \cite[Remark 3.3]{FK} we then have that $$\tau(|f\log(f)|)=\int_0^\infty|\mu_t(f)\log(\mu_t(f))|\,dt<\infty.$$This in turn ensures that $\tau(f\log(f))>-\infty$.
\end{proof}

One also has the following ``almost'' characterisation of the elements of $L{\log L}(\tM)^+$ for which $f\log(f)$ is integrable. 

\begin{proposition}
As before let $\cM$ be a semifinite von Neumann algebra with an fns trace $\tau$. Let $f = f^* \in \widetilde{\mathcal{M}}$ be given. 
By $\chi_I$ will denote spectral projection of $f$ corresponding to the interval $I$.
If $f\in L{\log L}(\tM)^+$ with $\tau(\chi_{[0,1]})<\infty$, then $\tau(|f\log(f)|)$ exists (i.e. $f\log(f)\in L^1(\tM)$).

Conversely if $\tau(|f\log(f)|)$ exists , then $f\in L{\log L}(\tM)^+$ with $\tau(\chi_I)<\infty$ for any open subinterval $I$ of $[0,1]$.
\end{proposition}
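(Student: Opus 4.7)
The strategy is to split $|f\log f|$ according to the spectral projections of $f$ at the threshold $1$, exploiting the fact that $t\mapsto t\log t$ is bounded in modulus by $e^{-1}$ on $[0,1]$ and agrees with the Young's function $\Psi(t)=t\log^{+}t$ generating $L\log L$ on $(1,\infty)$. The bridge between operator traces and integrals of singular value functions is the identity $\tau(g(f))=\int_{0}^{\infty}g(\mu_{t}(f))\,dt$ for non-negative Borel $g$ with $g(0)=0$, as already used in the proof of Proposition \ref{prop6.8}.

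For the forward direction, set $p=\chi_{[0,1]}(f)$ and $q=\chi_{(1,\infty)}(f)$. The Borel functional calculus gives $|f\log f|p\le e^{-1}p$, whence $\tau(|f\log f|p)\le e^{-1}\tau(\chi_{[0,1]}(f))<\infty$ by hypothesis. On the complementary part, $|f\log f|q=\Psi(f)q$, and the trace-integration identity reduces matters to showing $\int_{\{t:\mu_{t}(f)>1\}}\Psi(\mu_{t}(f))\,dt<\infty$. The hypothesis $f\in L\log L(\tM)^{+}$ supplies some $\lambda>0$ with $\int_{0}^{\infty}\Psi(\lambda\mu_{t}(f))\,dt<\infty$; when $\lambda\ge 1$ monotonicity of $\Psi$ finishes immediately, and when $\lambda<1$ I would split $(1,\infty)$ at $1/\lambda$ and use the pointwise decomposition $\Psi(s)=\lambda^{-1}\Psi(\lambda s)+s|\log\lambda|$ valid on $\{\lambda s\ge 1\}$, together with the Markov-type bound $\tau(\chi_{[e/\lambda,\infty)}(f))\le \lambda^{-1}\int\Psi(\lambda\mu_{t}(f))\,dt$ (obtained from $\Psi(s)\ge s$ for $s\ge e$), to dominate the integrand on the tail.

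The converse direction is shorter. If $\tau(|f\log f|)<\infty$, restricting to $q$ yields $\tau(\Psi(f)q)\le\tau(|f\log f|)<\infty$, and the trace-integration identity then gives $\int_{0}^{\infty}\Psi(\mu_{t}(f))\,dt<\infty$, placing $f$ in $L\log L(\tM)^{+}$ (with defining scaling constant $\lambda=1$). For the spectral-projection claim, on any closed subinterval $[a,b]\subset(0,1)$ the continuous positive function $|t\log t|$ attains a strictly positive minimum $c_{a,b}$, so $|f\log f|\chi_{[a,b]}(f)\ge c_{a,b}\chi_{[a,b]}(f)$, which yields $\tau(\chi_{[a,b]}(f))\le c_{a,b}^{-1}\tau(|f\log f|)<\infty$. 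Exhausting each open subinterval of $[0,1]$ that is bounded away from $\{0,1\}$ by such closed sub-intervals and invoking normality of the trace completes the argument.

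The principal obstacle is the forward direction when the scaling constant $\lambda$ supplied by the $L\log L$-hypothesis is strictly less than $1$: since $\Psi(t)=t\log^{+}t$ vanishes on $[0,1]$, it fails the global $\Delta_{2}$-condition, so membership in $L\log L$ does not a priori deliver $\tau(\Psi(f))<\infty$. The auxiliary hypothesis $\tau(\chi_{[0,1]}(f))<\infty$ is precisely what is needed to absorb the bounded ``transition region'' that arises in the rescaling step above.
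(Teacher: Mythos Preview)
Your overall architecture is exactly the paper's: split $|f\log f|$ at the spectral threshold $1$, bound the $[0,1]$-piece by $e^{-1}\chi_{[0,1]}$, and control the $(1,\infty)$-piece via the $L\log L$-hypothesis; the converse is likewise the same argument (and, like the paper's, actually only yields $\tau(\chi_I)<\infty$ for intervals $I$ compactly contained in $(0,1)$). The one place you depart from the paper is that you correctly flag a point the paper passes over in silence: because $\Psi(t)=t\log^+t$ vanishes on $[0,1]$, it fails global $\Delta_2$, so membership in $L\log L(\tM)$ only supplies $\int_0^\infty\Psi(\lambda\mu_t(f))\,dt<\infty$ for \emph{some} $\lambda>0$, not $\tau(\Psi(f))<\infty$ outright. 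The paper simply asserts the latter.

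Your attempted repair of this point, however, does not go through, and in fact cannot: the hypothesis $\tau(\chi_{[0,1]}(f))<\infty$ gives no information about the transition region $\{1<\mu_t(f)<1/\lambda\}$, since $\chi_{[0,1]}(f)$ and $\chi_{(1,1/\lambda)}(f)$ are orthogonal spectral projections. Concretely, take $\cM=L^\infty(\mathbb{R},dx)$ with $\tau=\int\cdot\,dx$ and $f\equiv 3/2$. Then $\chi_{[0,1]}(f)=0$, so $\tau(\chi_{[0,1]}(f))=0$; and $f\in L\log L(\tM)^+$ since $\Psi(\lambda\cdot\tfrac{3}{2})=0$ for any $\lambda\leq 2/3$, giving $\|f\|_\Psi\leq 3/2$. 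Yet $\tau(|f\log f|)=\tfrac{3}{2}\log\tfrac{3}{2}\cdot\tau(\I)=\infty$. So the forward implication is actually false in the stated generality, and the gap you identified is a gap in the statement itself, not merely in the argument. Your instinct that an auxiliary finiteness condition is needed to absorb the transition region is exactly right --- one that works is $\tau(\chi_{[0,c]}(f))<\infty$ for some $c>1$, or simply $\tau$ finite --- but $\tau(\chi_{[0,1]}(f))<\infty$ alone does not suffice.
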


Before proving this Proposition, we discuss the significance of the condition $\tau(\chi_{[0,1]})<\infty$. For any $f\geq 0$, membership of $L{\log L}(\tM)$ ensures that $f\log(f)\chi_{[1,\infty)} \in L^1(\tM))$. (Here $\chi_{[1,\infty)}$ is a spectral projection of $f$.) This follows from the fact that the Young's function generating this space is $t\log^+(t)<\infty$. However to be sure that in fact $f\log(f)\in L^1(\tM)$, we need some additional criteria with which to control the portion $f\log(f)\chi_{[0,1]}$. The requirement that $\tau(\chi_{[0,1]})<\infty$, is precisely such a criterion. Consequently, if the ``state'' is taken from the noncommutative Zygmund space $L_{\log L}(\widetilde{M})$ and $\tau( \chi_{[0,1]})< \infty$, then the entropy function exists!

\begin{proof}
Firstly we show: if $f\in L{\log L}(\widetilde{M})^+$ with $\tau(\chi_{[0,1]})<\infty$, then $\tau(|f\log(f)|)$ exists.
To this end note that
$f\in L{\log L}(\widetilde{M})^+$  guarantees that \\ $0\leq\tau(f\chi_{[1,\infty)}\log(f\chi_{[1,\infty)}))< \infty$ since $\log(f\chi_{[1,\infty)})=\log^+(f\chi_{[1,\infty)})$. Now notice that $0\geq f\chi_{[0,1]}\log(f\chi_{[0,1]})\geq - \frac{1}{e}\chi_{[0,1]}$. So then 
$\tau(|f\chi_{[0,1]}\log(f\chi_{[0,1]})|) \leq \frac{1}{e}\tau(\chi_{[0,1]})<\infty$. Hence $\tau(|f\log(f)|)<\infty$.

\bigskip
Conversely, we show that if $\tau(|f\log(f)|)$ exists, then $f\in L{\log L}(\tM)^+$ and 
for any $0<\delta < \frac{1}{e}<\epsilon<1$ we have that $\tau(\chi_{[\delta,\epsilon]}(f))<\infty$.

Notice that $0\leq \tau(|f\chi_{[1,\infty)}\log(f\chi_{[1,\infty)})|)\leq \tau(|f\log(f)|)$. Furthermore, one has
$f\chi_{[1,\infty)}\log(f\chi_{[1,\infty)})=f\log^+(f)$, which means that the above inequality  ensures that 
$f\in L{\log L}(\widetilde{M})^+$. For the final part of the claim note that $t\log(t)$ is negative valued on $[0,1]$, 
decreasing on $[0,e^{-1})$ and increasing on $(e^{-1},1]$. These facts ensure that 
\begin{eqnarray*}
0 &\geq& \delta\log(\delta)\chi_{[\delta, 1/e]} + \epsilon\log(\epsilon)\chi_{[1/e,\epsilon]}\\
&\geq& f\chi_{[\delta, 1/e]}\log(f\chi_{[\delta, 1/e]}) + f\chi_{[1/e,\epsilon]}\log(f\chi_{[1/e,\epsilon]})\\
&=& f\chi_{[\delta,\epsilon]}\log(f\chi_{[\delta,\epsilon]}).
\end{eqnarray*}
So for $0<K<\{|\delta\log(\delta)|, |\epsilon\log(\epsilon)|\}$ we will have 
$$K\tau(\chi_{[\delta,\epsilon]}) \leq \tau(|f\chi_{[\delta,\epsilon]}\log(|f\chi_{[\delta,\epsilon]})|) \leq \tau(|f\log(f)|)<\infty.$$
This completes the proof of the Proposition.
\end{proof}

\begin{remark}
We briefly consider the significance of the above Proposition for more general settings.
\begin{enumerate}
\item  Thus far we have studied the entropy function $x \mapsto x\log x$ for semifinite algebras. However it is non-semifinite type III $W^*$-algebras that seem to be the rule for infinite systems. The crossed-product technique provides a tool for bridging this gap, in that such type III algebras can be represented as subalgebras of semifinite algebras. Moreover using this technique, the noncommutative Orlicz spaces corresponding to such type III algebras can the be constructed using semifinite algebras. 
\item We briefly describe how the quantum Orlicz spaces mentioned above may be constructed for a $\sigma$-finite von Neumann algebra $\M$ with an fns state $\phi$ by means of the crossed-product technique. For such algebras one has (cf \cite{wata})
\begin{equation}
L^1(\cM) =  \mathrm{closure}(h^{\frac{1}{2}}_{\phi} \cM h^{\frac{1}{2}}_{\phi})
\cong \cM_*
\end{equation}
where $h^{\frac{1}{2}}_{\phi}$ is an unbounded operator (equal to the Radon-Nikodym derivative of the extension $\tilde{\phi}$ of $\phi$ on the crossed-product $\A = \cM \rtimes_{\sigma} \Rn$ with respect to the canonical trace $\tau_{\A}$ on $\A$). 

Starting from $L^1(\M)$, we can now define the required quantum Orlicz spaces. Let $\Psi$, $\Phi$ be a pair of the complementary Young's functions. To define the quantum Orlicz space $L^\Psi(\M)$, we first make use of the norm of the K\"othe dual (namely $L^\Phi(0,\infty)$) of $L^\Psi(0,\infty)$ to define the function 
$$\theta_\Phi(t)=|\!|\!|\chi_{[0,t]}|\!|\!|_\Phi\quad t\geq 0.$$(The precise form of the norm on $L^\Phi(0,\infty)$ will depend on the norm we start with on $L^\Psi(0,\infty)$.) The function $\theta_\Phi$ is the so-called fundamental function of $L^\Phi(0,\infty)$ (cf. \cite{BS}). In the case of $L^p(0,\infty)$ spaces, the associated fundamental function is just $\theta_p(t)=t^{1/p}$. Using $\theta_\Phi$, we now define the quantum Orlicz space $L^\Psi(\cM)$ to be the space of all (possibly unbounded) operators $f$ in $\widetilde{\A}$ for which 
$\widetilde{\theta}_\Phi(h)^{1/2}f\widetilde{\theta}_\Phi(h)^{1/2}\in L^1(\cM)$. For such spaces the quantity 
$\mu_1(f)$ turns out to be a quasi-norm in terms of which all convergence properties can be described. If we apply this construction to a semifinite algebra $\M$ equipped with an fns trace $\tau_{\M}$, we end up with a space which is an exact copy of the space $L^\Psi(\tM)$ 
produced using the techniques described in Section \ref{s5}. Details of the above construction may be found in \cite{LL}.
\end{enumerate}
\end{remark}

In conclusion, analogous to the commutative case, we get the following conclusion.

\begin{corollary} Either of the pairs 
$$\langle L^{\cosh -1}, Llog(L+1)\rangle$$
or 
$$\langle L_{exp}, L\log L\rangle$$
provides an elegant rigorous framework for the description of non-commutative regular statistical systems,
where now the Orlicz (and Zygmund) spaces are noncommutative.
\end{corollary}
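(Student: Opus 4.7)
The plan is to assemble this corollary as a synthesis of the preceding noncommutative results, essentially repeating the classical argument at the end of Section 3 with each classical ingredient replaced by its quantum counterpart. Since the statement is a summary, I would not attempt new technical machinery; rather I would structure the argument around three pillars: identifying the observable side, identifying the state side as a Köthe dual carrying finite entropy, and reconciling the Orlicz pair with the Zygmund pair.

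First I would address the observable side. Theorem \ref{QPS} explicitly identifies the class $L^{quant}_x$ of noncommutative regular random variables with the weighted Orlicz space $L_x^{\cosh-1}(\tM)$, so $L^{\cosh-1}$ is by construction the natural home of regular quantum observables, exactly parallel to the Pistone--Sempi result in the classical case. The Dodds--Dodds--de Pagter framework of Section \ref{s5} provides the underlying Banach space structure, and Theorem \ref{ncbf} certifies that the resulting space is Banach and embeds continuously into $\tM$.

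Next I would turn to the state side. I would establish a noncommutative analogue of Proposition \ref{jeden}, namely that $L\log(L+1)(\tM)$ is an equivalent renorming of the Köthe dual of $L^{\cosh-1}(\tM)$. The classical proof relied only on the Young-function equivalences $\cosh x - 1 \approx e^x - x - 1$ and $(x+1)\log(x+1) - x \approx x\log(x+1)$, and these equivalences are intrinsic to the Young functions themselves; since the Dodds--Dodds--de Pagter construction depends on $\Psi$ only through the rearrangement-invariant function norm $\|\cdot\|_\Psi$ on $L^0(0,\infty)$, equivalent Young functions yield the same noncommutative space (up to equivalent renorming). Thus the classical duality lifts verbatim. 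Having done this, Proposition \ref{prop6.8} ensures that on the positive cone of $L\log(L+1)(\tM)$ the quantum entropy $\tau(f\log f)$ is bounded above, finite on a dense subset, and hence well-defined as an extended real function, exactly as desired.

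Finally, for the Zygmund reformulation, I would transfer Proposition \ref{dwa} to the noncommutative setting. In the finite-trace case, $\mu_t(x)\,dt$ gives a finite resonant measure on $(0,\infty)$ (as noted before Proposition \ref{prop6.8}), so the classical identification $L^{\cosh-1} = L_{exp}$ and its consequence $L\log(L+1) \approx L\log L$ both pass through $\mu(\cdot)$ to give $L^{\cosh-1}(\tM) = L_{exp}(\tM)$ and $L\log(L+1)(\tM) \approx L\log L(\tM)$. The step I expect to require the most care is the explicit verification that equivalence of Young functions transfers across the Dodds--Dodds--de Pagter construction in the weighted, possibly infinite-trace setting; once that is in hand, the rest is bookkeeping. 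The main conceptual obstacle, as opposed to technical one, is to make sure the $B(H)$ caveat discussed at the end of Section \ref{s5} is respected: the corollary genuinely concerns \emph{large} systems (type II or III after reduction via crossed products), since for type I factors the construction collapses to the standard $\langle B(H), L^1(B(H))\rangle$ pair.
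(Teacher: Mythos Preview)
Your proposal is correct and matches the paper's approach: the corollary is stated in the paper as a summary without explicit proof, relying on exactly the ingredients you identify --- Theorem \ref{QPS} for the observable side, Proposition \ref{prop6.8} for entropy, the observation that $\mu_t(x)\,dt$ is a finite resonant measure for the Zygmund reduction, and the fact that Young-function equivalences pass through the Dodds--Dodds--de Pagter construction. Your plan is slightly more explicit than the paper in flagging the transfer of Propositions \ref{jeden} and \ref{dwa} to the noncommutative setting as a step requiring verification, but the substance is the same.
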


We wish to close this Section with an examination of von Neumann entropy $S(\varrho) = - \mathrm{Tr} (\varrho \log \varrho)$, where $\varrho$ is a density matrix 
(on a Hilbert space $H$) and $\mathrm{Tr}$ is the canonical trace on $B(H)$  (so the entropy is considered on $B(H)$). In this case $\widetilde{B(H)} = B(H)$, 
i.e. from noncommutative measure theory the von Neumann algebra $B(H)$ presents an exceptional case (see the last paragraph of Section 5). We already mentioned that from a physical point of view, type I von Neumann algebras are not well suited for the description of infinite quantum systems. Nevertheless, the von Neumann quantum entropy plays so important a role in the description of simple systems (see \cite{Weh}, \cite{OP}) that the proposed examination is justified.

The final paragraphs of Section 5 and the proposed approach imply that the set of regularized states will be given by $L\log(L+1)(B(H))$ with particularly nice behaviour as far as entropy is concerned exhibited by those states which also belong to $L^1(B(H))$, where $L^1(B(H))$ stands for the trace-class operators (we remind that trace class operators form the predual of $B(H)$ and that $L^1 \approx \cM_*$). Let $0\leq \varrho \in L^1$ be given. Hence $\varrho = \sum \lambda_i P_{x_i}$, $\lambda_i\geq 0$, $\sum\lambda_i < \infty$, where $P_{x_i}$ is an orthogonal projector onto the unit vector $x_i \in H$, and where $\{x_i\}$ forms an orthonormal system in $H$. We may additionally assume that the $\lambda_i$'s are arranged in decreasing order. But then we must have that $\lambda_i$ decreases to 0 (or else $\sum\lambda_i < \infty$ will fail). Since $\log$ is increasing on $[1, \infty)$, this in turn ensures that
\begin{equation}
\label{A1} 
0\leq \lambda_i\log(\lambda_i+1) \leq K\lambda_i\quad\mbox{for all}\quad i
\end{equation}
where $K=\log(\lambda_1+1)$. But then $$\sum \lambda_i 
\log(\lambda_i + 1) < \infty,$$which by the discussion in Section 5 ensures that $\varrho \in L\log(L+1)$. Thus in this exceptional case one gets that $L^1 \subset L\log(L+1)$. Since for any $\alpha \in [0,1]$ we have that $\alpha \leq \alpha^{1/2}$, it similarly follows that $L^{1/2} \subset L^1$ in this case. 

Repeating the argument given in Section 4, one gets: {\it $S(\varrho)$ is bounded from below on $L^1$, and from above on the subspace $L^{1/2}$. Thus if on the basis of Propositions \ref{prop4.3} and \ref{prop6.8} one prefers the space $L^1 \cap L\log(L+1)$ to $L\log(L+1)$ in our approach, then for this very exceptional case that will yield the pair
\begin{equation}
\langle B(H), L^1(B(H))\rangle .
\end{equation}
Thus the approach presented in this paper, canonically extends the elementary quantum theory based on the above pair.}

To elucidate the peculiarity of the considered case we note
\begin{enumerate}
\item The Banach function space $\ell_{\Phi}(\mathbb{N})$ is defined on infinite, completely atomic measure space (with all atoms having equal measure). Therefore the inclusions given by (2.10) are valid for $B(H)$.
\item Observe that here 
\begin{equation}
\label{koncowauwaga}
L^1 \approx L^1(B(H)) \subset Llog(L+1) \subset B(H) \approx L^{\infty}
\end{equation}
which is completely opposite to (2.11).
\item The considered quantization of simple models leads to (\ref{koncowauwaga})
\item The above argument is not valid for large systems described by factors of type III and II.
\item For a nonatomic measure space, as considered in Section 4, inclusions of type (\ref{koncowauwaga}) are not true.
\end{enumerate}
Finally we note that the arguments given in Section 4 and (\ref{A1}),  
imply that the functionals $S_{\epsilon}(\varrho) = - \mathrm{Tr} (\varrho \log ( \varrho + \epsilon)$, $\epsilon >0$, provide well defined approximations of the von Neumann entropy $S(\varrho)$.

\section{Acknowledgments}

The support of the grant number N N202 208238 as well as   the Foundation for Polish Science TEAM
project cofinanced by the EU European Regional Development Fund
for W. A. Majewski and a grant from the National Research Foundation for L. E. Labuschagne is gratefully acknowledged. Any opinion, findings and conclusions or recommendations expressed in this material, are those of the authors, and therefore the NRF do not accept any liability in regard thereto.

\end{document}